\newtheorem{Proposition}{Proposition}
\newtheorem{Lemma}{Lemma}
\newtheorem{lemma}[Lemma]{$\mathbf{Lemma}$}
\newtheorem{proposition}[Proposition]{Proposition}
\newcounter{problem}
\newcounter{save@equation}
\newcounter{save@problem}
\newenvironment{problem}
{\setcounter{problem}{\value{save@problem}}%
  \setcounter{save@equation}{\value{equation}}%
  \let\c@equation\c@problem
  \subequations
}
{\endsubequations
  \setcounter{save@problem}{\value{equation}}%
  \setcounter{equation}{\value{save@equation}}%
}
\begin{document}%%
\title{ No-Pain No-Gain: DRL Assisted Optimization in Energy-Constrained CR-NOMA Networks }\vspace{-0.25em}

\author{ Zhiguo Ding, \IEEEmembership{Fellow, IEEE}, Robert Schober, \IEEEmembership{Fellow, IEEE}, and H. Vincent Poor, \IEEEmembership{Life Fellow, IEEE}    \thanks{ 
  
\vspace{-2em}

    Z. Ding and H. V. Poor are  with the Department of
Electrical Engineering, Princeton University, Princeton, NJ 08544,
USA. Z. Ding
 is also  with the School of
Electrical and Electronic Engineering, the University of Manchester, Manchester, UK (email: \href{mailto:zhiguo.ding@manchester.ac.uk}{zhiguo.ding@manchester.ac.uk}, \href{mailto:poor@princeton.edu}{poor@princeton.edu}).
R. Schober is with the Institute for Digital Communications,
Friedrich-Alexander-University Erlangen-Nurnberg (FAU), Germany (email: \href{mailto:robert.schober@fau.de}{robert.schober@fau.de}).

  }\vspace{-2em}}
 \maketitle
 
\begin{abstract}  
This paper applies machine learning  to optimize the transmission policy of cognitive radio inspired    non-orthogonal multiple access (CR-NOMA) networks, where    time-division multiple access (TDMA) is used to serve  multiple primary users and   an energy-constrained secondary user is admitted to the primary users' time slots via NOMA. During   each time slot, the secondary user performs  the two tasks:  data transmission and   energy harvesting based on the signals received from the primary users. The goal of the paper is to   maximize the secondary user's long-term throughput, by optimizing its transmit power and   the time-sharing coefficient for its two tasks. The long-term throughput maximization problem is challenging   due to the need for making decisions that yield long-term gains but might result in short-term losses.  For example, when in a given time slot, a primary user with large channel gains transmits,   intuition suggests  that the secondary user should not carry out data transmission due to the strong interference from the primary user but perform energy harvesting only, which results in zero data rate for this time slot but yields potential long-term  benefits. In this paper, a deep reinforcement learning (DRL)  approach  is applied to emulate  this   intuition, where the deep deterministic policy gradient (DDPG) algorithm is employed together   with convex optimization. Our simulation results demonstrate that the proposed DRL assisted NOMA transmission scheme can yield significant performance gains over two benchmark schemes. 
\end{abstract} \vspace{-1em} 

\section{Introduction} 
Machine learning has been recognized as  one of the most   important enabling technologies for the next generation wireless networks \cite{7792374}. The key idea behind  machine learning is to learn to make optimal decisions based on observed   data or the environment \cite{MLABishop}.    Because of its general utility, machine learning has been   applied to a variety of  wireless communication problems. For example,  supervised learning has been applied to link adaption and channel estimation in orthogonal frequency-division multiplexing (OFDM) systems  \cite{5200378, 8052521}. Unsupervised learning has been shown to be particularly  beneficial for improving the accuracy of wireless positioning in \cite{7349230} and reducing  the complexity of     beamformer design   \cite{8586870}. Reinforcement learning  has also been shown to be   applicable to various communication problems, including energy harvesting, resource allocation, data and computation offloading, and network security \cite{8714026, 9108195, 6485022}. 

This paper   considers  the application of machine learning to non-orthogonal multiple access (NOMA) systems. We note that this application has already received significant attention in the literature \cite{mojobabook}. For example,   an unsupervised machine learning algorithm, K-means, has been applied to millimeter-wave NOMA networks for the joint design of beamforming and user clustering \cite{8454272}. A similar K-means based approach has also been proposed for terahertz NOMA systems employing  nodes equipped with multiple antennas \cite{9115278}.  A power minimization problem for multi-carrier NOMA facilitated by simultaneous wireless information and power transfer (SWIPT) has been solved by applying deep learning in \cite{8626195}. In \cite{9184017}, an intelligent offloading scheme powered  by deep learning was proposed for NOMA assisted mobile edge computing (NOMA-MEC). Furthermore, deep learning approaches have been developed for   the design of NOMA transceivers and the codebook of sparse code multiple access (SCMA)   in \cite{8387468} and \cite{8254356}, respectively. 
An overview of additional  related work on  the application of machine learning to NOMA can be found in recent survey articles \cite{9154358,8951180,8792153,Heruisi}.

%A key challenge for the application of machine learning   to wireless communications is due to the following dilemma. On the one hand, if the considered wireless environments, e.g., wireless channels, are   time invariant,  the application of machine learning might not offer significant benefits, compared to   conventional optimization tools, such as convex optimization and game theory. On the other hand, if the wireless channels are   time-varying,  most  machine learning approaches require extensive training, which is needed  to enumerate all possible situations such that   the machine learning algorithm  can learn  how to emulate  optimal  decisions when   these situations are encountered   again in the future. Given the dynamic nature of wireless systems, extensive training can be prohibitively expensive, if not impossible. Therefore, a natural question   is  whether there exist  communication scenarios in which  the wireless environment is  time-varying but this change can be effectively learned and used with a reasonable training effort. Such  communication scenarios would  be ideal for the application of machine learning. 

This paper focuses on  a cognitive radio (CR) inspired NOMA network, where  multiple primary users and one energy-constrained secondary user communicate with the same base station, as shown in Fig. \ref{fig 0 ax}. Time-division multiple access (TDMA) is used to serve  the primary users, where the use of CR-NOMA ensures that the secondary user can be admitted to the primary users'  time slots without degrading their quality of service (QoS) experience  \cite{Zhiguo_CRconoma}. During   each time slot, the secondary user performs the   two tasks: Data transmission and   energy harvesting from  the signals sent by the primary users  \cite{Zhouzhang13}. 
We note that even if   all the users' channels are constant, for the secondary user, its communication environment is time-varying and changes from one time slot to the next, since different primary users are scheduled in different time slots, as shown in  Fig. \ref{fig 0 bx}. This time-varying pattern is difficult to   handle with conventional optimization tools, but can be effectively learned by machine learning.
 
The objective of this paper is to maximize  the secondary user's long-term throughput,
 by optimizing its transmit power and also the time-sharing coefficient for its two tasks. This throughput maximization problem  is challenging because of the need for making    decisions which  might yield a long-term gain but could result in a short-term loss. For example, when a primary user with strong channels   to both the base station and the secondary user transmits in a given time slot, a human would decide  that the secondary user should not transmit data but  carry  out energy harvesting only, since the primary user would cause severe interference. Even though this decision leads to zero data rate in this particular time slot, it may yield  a   performance gain in the future since a significant amount of energy can be harvested due to the primary user's strong channel conditions.    With  conventional optimization tools, it is difficult to emulate this `no-pain no-gain'-like   decision,  which motivates the application of machine learning to the   considered NOMA scenario, because  machine learning algorithms are well known for their    capability of     emulating human behaviour and making `no-pain no-gain'-like   decisions \cite{MLABishop,suttonrl,rlsurvey}.

 {Recall that compared to supervised and unsupervised machine learning, reinforcement learning has the  capability to make decisions and perform learning at the same time, which is a desirable  feature for wireless communication applications  \cite{8714026,suttonrl,9023459,8500158}. Deep deterministic policy gradient (DDPG) is an effective implementation of      deep reinforcement learning (DRL), where the  action-value functions are approximated by neural networks and the  action space does not have to be discrete  \cite{drlX,ddpg}.  Because   the optimization variables for the considered long-term throughput maximization problem   are continuous,      DDPG is adopted  in this paper.}   In order to facilitate the application of DDPG, the considered long-term throughput maximization problem is first decomposed into   two optimization subproblems. The first subproblem  focuses on a single time slot and optimizes the secondary user's transmit power and the time sharing coefficient, where the closed-form optimal solution  is found by applying convex optimization. The second subproblem targets the optimization of  the energy fluctuation across  different time slots, and is solved by applying the adopted DRL  tool, DDPG.  Extensive simulation results are provided   to demonstrate that the proposed DDPG assisted NOMA scheme   effectively emulates human decisions and offers significant performance gains over two benchmark schemes.

\section{System Model}\label{section model}
 \begin{figure}[!t]\vspace{-1em}
\begin{center} \subfigure[ System Model ]{\label{fig 0 ax}\includegraphics[width=0.52\textwidth]{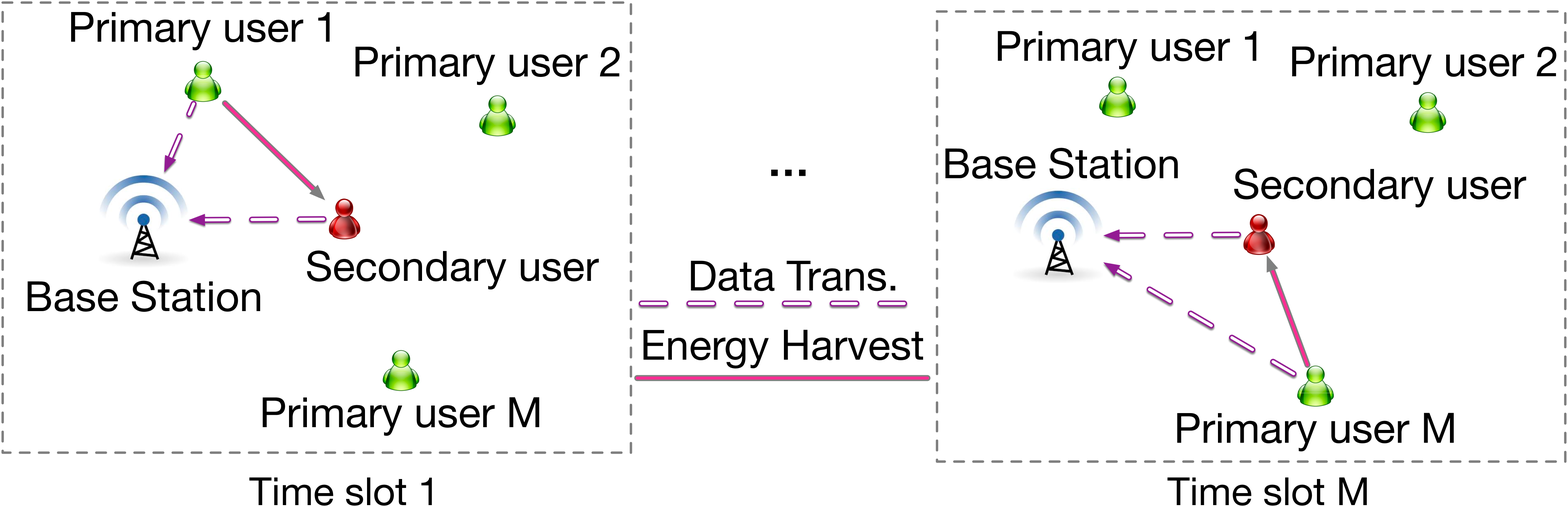}}
\subfigure[CR-NOMA Transmission ]{\label{fig 0 bx}\includegraphics[width=0.42\textwidth]{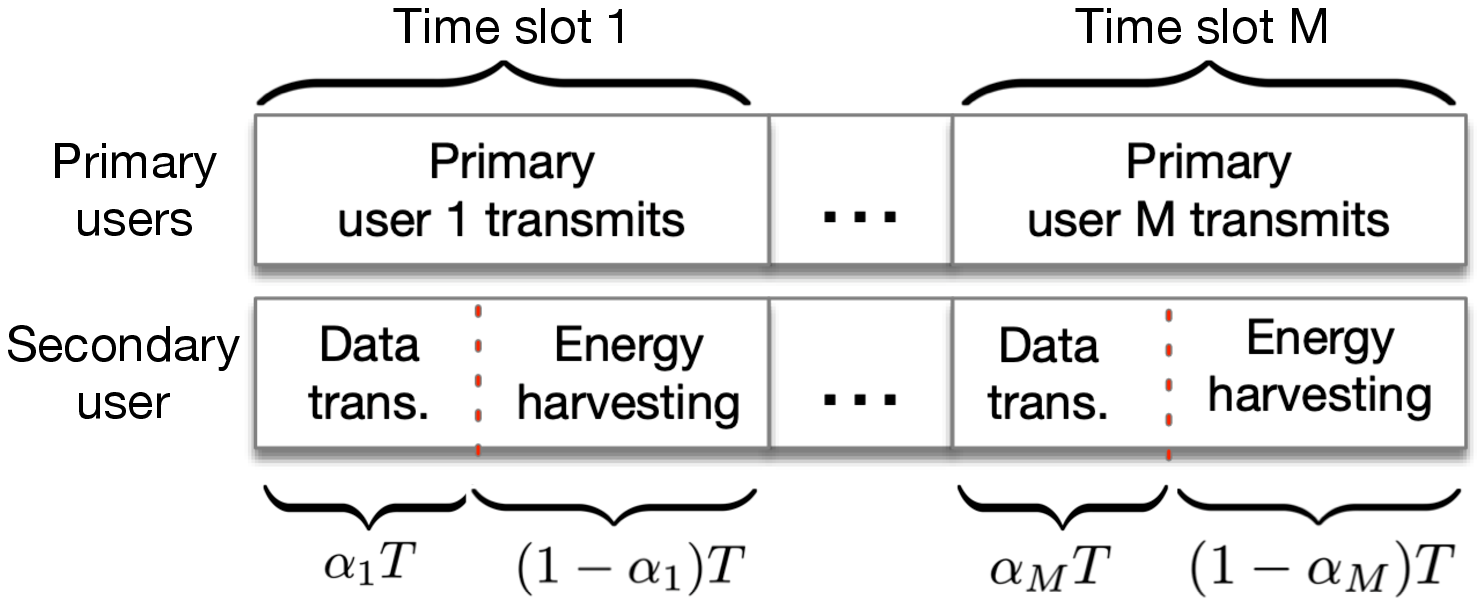}} \vspace{-1em}
\end{center}
\caption{   System diagram for the considered CR-NOMA uplink communication scenario, where a secondary  user shares the spectrum with $M$ primary  users and harvests energy from the signals sent by the primary  users.    }\label{fig0}\vspace{-2em}
\end{figure}

Consider a   CR-NOMA uplink  scenario as shown in Fig. \ref{fig 0 ax}, with   one base station, $M$   primary users, denoted by ${\rm U}_m$, $1\leq m \leq M$, and one energy-constrained secondary user, denoted by ${\rm U}_0$. 
 {It is assumed that the  $M$ primary users are scheduled to transmit in different   time slots as shown in Fig. \ref{fig 0 bx}, and this TDMA based   scheduling is fixed and carried out over a long period of time such that each primary user is scheduled to transmit multiple times\footnote{ {We note that    the key idea of the proposed NOMA scheme is to ensure that the admission of the secondary user as well as the proposed  resource allocation for the secondary user are transparent to the primary users. In other words, we do not adjust  the primary users'   transmission strategies because it is not the primary users' responsibility  to accommodate the secondary user.}}.}    In particular, in the $n$-th time slot, denoted by $t_n$, user ${\rm U}_m$,  $m=(n-1)\oplus M+1$, is scheduled to transmit, where $\oplus$ denotes the modulo operation. For example, for $M=2$, the primary users scheduled at $t_1$, $t_2$, $t_3$, and $t_4$ are ${\rm U}_1$, ${\rm U}_2$, ${\rm U}_1$, and ${\rm U}_2$, respectively.  {For notational convenience, denote the primary user scheduled at $t_n$ by $\bar{\rm U}_n$, i.e., $\bar{\rm U}_n= {\rm U}_m$, $m=(n-1)\oplus M+1$.   Denote the channel gain between the base station and $\bar{\rm U}_n$ by $h_n$,   the channel between  ${\rm U}_0$ and  $\bar{\rm U}_n$ by $h_{n,0}$, and the channel between ${\rm U}_0$ and  the base station at $t_n$ by $g_{0,n}$,     where  the effects of  both large-scale path loss and quasi-static multi-path fading are included.  } 

%To simplify the notations, we denote the  primary  user served at $t_n$ by  ${\rm U}_n$, where $n=\bar{n}M+(n-1)\oplus M+1$, $\bar{n}=\left \lfloor{\frac{n-1}{M}}\right \rfloor 
%$ and $\left \lfloor{\cdot}\right \rfloor$ denotes a floor operation.   

The secondary user is admitted to the primary users' time slots via CR-NOMA. In particular, in  time slot $t_n$,  the secondary user uses the first $\alpha_n T$ seconds   for its data transmission, where $\alpha_n$ is a time-sharing parameter, $0\leq \alpha_n\leq 1$, and $T$ denotes the duration of each time slot. The remainder of the time slot, $(1-\alpha_n)T$ seconds,  will be used for battery charging by  harvesting energy from the signals sent by primary user $\bar{\rm U}_n$, as shown  in Fig. \ref{fig 0 bx}. 
At the beginning of $t_n$, the scheduled primary  user's channel state information (CSI), i.e., $h_n$ and $h_{n,0},$ is assumed to be available at   the secondary user\footnote{ {The required CSI knowledge can be realized as follows. Prior to its transmission,    primary user $\bar{\rm U}_n$ can broadcast a pilot signal which allows  the base station to estimate $h_n$ and the secondary user to estimate $h_{n,0}$ simultaneously. Via a reliable feedback link,  the base station can pass its knowledge of   $h_n$ to the secondary user. The base station needs to further broadcast a pilot signal and   to enable  the secondary user to estimate $g_{0,n}$.   For low-mobility applications, such as  Internet of Things with static sensors, the system overhead caused by channel estimation  is moderate since the pilot signals can be sent  infrequently.   }}. 

Denote   the remaining energy in the secondary  user's battery at the beginning of the $n$-th time slot   by $E_n$, which means that $E_{n+1}$ can be expressed as follows:
\begin{align} 
&E_{n+1} = \min\left\{ \underset{\text{ Harvested energy  }}{\underbrace{(1- \alpha_n) T \eta P_n|h_{n,0}|^2}}-\underset{\text{ Used energy  }}{\underbrace{ \alpha_n TP_{0,n}}}  +E_n, E_{\max}\right\}, \label{E_{n+1}}
\end{align}
where $E_{\max}$ denotes the capacity limit of the secondary user's battery, $\eta$ denotes the energy harvesting efficiency coefficient,  $P_n$  and $P_{0,n}$ denote the transmit powers of $\bar{\rm U}_n$ and ${\rm U}_0$ at $t_n$, respectively.  {Assume that the initial amount of energy available in the secondary user's battery is  $E_{\max}$, i.e.,  $E_1=E_{\max}$. We note that  \eqref{E_{n+1}} ensures that the amount of   harvested  energy   cannot exceed   the battery capacity, $E_{\max}$.}

Due to the energy  causality  constraint,    the secondary user's transmit power is capped by the energy available in its battery, i.e.,
\begin{align}\label{power constraint}
\alpha_n TP_{0,n}\leq E_n  .
\end{align}
We note that \eqref{power constraint}   ensures that $E_{n+1}$ shown in \eqref{E_{n+1}} is always non-negative. 

Therefore, the data rate that the secondary user can achieve  at $t_n$ is given by\footnote{For notational convenience, the noise power is assumed to be normalized, such that the value of   the noise power is absorbed into the channel gains $|g_{0,n}|^2$ and $|h_n|^2$. }
\begin{align}\label{raten}
R_n = \alpha_n \ln\left( 1+\frac{P_{0,n}|g_{0,n}|^2}{1 + P_n|h_n|^2}
\right), \text{ [nats per channel use (NPCU)]} ,
\end{align}
 {which guarantees that the secondary user's signal can   be successfully decoded during the first stage of successive interference cancellation (SIC) at the base station. After the base station removes the successfully decoded signal of the secondary user from the received signal, the primary users' signals can be decoded in the same manner   as when the secondary user is absent. In other words,  \eqref{raten}  guarantees  that the primary users' QoS experience is not   affected by admitting the secondary user into their  time slots.} We note that \eqref{raten}  is based on      the  QoS-based SIC decoding order, but more sophisticated  SIC strategies can be used to further improve the performance of the CR-NOMA scheme, as discussed in \cite{hsic01}. 
 
 The aim of this paper is to maximize the secondary user's long-term  throughput by optimizing the time-sharing parameter, $\alpha_n$, and  {the secondary user's transmit power, $P_{0,n}$, in the $n$-th time slot $t_n$. Define  the secondary user's transmission environment   at $t_n$, also termed the state in the context of reinforcement learning, as $s_n$. One choice of $ {s}_n$ can be  $ {s}_n=\begin{bmatrix}|g_{0,n}|^2& |h_n|^2& |h_{n,0}|^2  & E_n\end{bmatrix}^T$, where $[\cdot]^T$ denotes a transpose operation.  Define  the policy which is a strategy, i.e., a set of sequential actions $\begin{bmatrix}P_{0,1}, \alpha_1, P_{0,2}, \alpha_2, \ldots \end{bmatrix}$,   adopted  by the secondary user as  $\pi_{P_{0,n}, \alpha_n }$.   Then,  the corresponding   optimization problem can be formulated as follows \cite{suttonrl,9108195, 6485022}: 
   \begin{problem}\label{pb:1}
  \begin{alignat}{2}
\underset{\pi_{P_{0,n}, \alpha_n } }{\rm{max}} &\quad   
\mathcal{E} \left\{ \left. \sum^{\infty}_{n=1}\gamma^{n-1} R_n \right| \pi_{P_{0,n}, \alpha_n },  {s}_1 \right\}
\label{obj:1} \\
\rm{s.t.} & \quad  
 E_{n+1} = \min\left\{ (1-\alpha_n) T \eta P_n|h_{n,0}|^2 -\alpha_n TP_{0,n} +E_n,E_{\max}\right\}
\label{1st:1}
\\&\quad 
\alpha_nTP_{0,n}\leq   E_n  \label{2st:1}
\\
& \quad 
0\leq \alpha_n\leq 1
\label{3st:1} 
\\
& \quad 
0\leq P_{0,n}\leq  P_{\max},
\label{4st:1} 
  \end{alignat}
\end{problem}  
 where $\mathcal{E} \left\{ \left. \sum^{\infty}_{n=1}\gamma^{n-1} R_n \right| \pi_{P_{0,n}, \alpha_n },  {s}_1 \right\}$ denotes the expected value of the discounted cumulative  sum of the data rates for given $\pi_{P_{0,n}, \alpha_n }$ and $ {s}_1$, also termed the state-value function for policy $\pi_{P_{0,n}, \alpha_n }$ in the context of reinforcement learning\footnote{ {The  expectation carried out in \eqref{obj:1} is explained in the following. In particular, given the policy, the secondary user's action at $t_n$, $n\geq 1$, is fixed. However,  even though both $ {s}_1$ and the actions are fixed, the realization of the  state   and  the secondary user's data rate at $t_n$, $n>1$, are not deterministic  due to the time-varying environment. Hence,  the expectation is needed to capture the statistical property of the state transitions.    }},   $P_{\max}$ denotes the maximal transmit power of the secondary user, and $\gamma$ denotes the  discount rate parameter \cite{suttonrl}. We note that the use of the discount rate parameter ensures that   a decision which yields a long-term gain is preferred over a short-sighted decision which maximizes the instantaneous data rate only (i.e., $\gamma=0$). In other words, by varying the choice of $\gamma$ between $0$ and $1$, a different tradeoff between the long-term gain and instantaneous benefit can be achieved. We note that $\gamma$ has to be strictly smaller than one, which is to ensure  that the infinite sum in the objective function of \ref{pb:1} is   finite.  }

{\it Remark 1:} One can also formulate the problem by asking the secondary user to charge its battery in each time slot before transmitting its data. Intuitively, both formulations should lead to the same   problem formulation, which however  is not true. In particular, if the secondary user carries out energy harvesting first, the energy available for data transmission is given by 
\begin{align}
\min\{E_{\max}, E_n + \alpha_n T \eta P_n|h_{n,0}|^2\} , 
\end{align}
which means that the constraints in \eqref{2st:1} and \eqref{3st:1} need to be formulated as follows:
\begin{align}\label{case5}
 E_{n+1} =& \max\left\{ \min\{E_{\max}, E_n + \alpha_n T \eta P_n|h_{n,0}|^2\}  -(1-\alpha_n)TP_{0,n},0 \right\}, \\ 
 \alpha_nTP_{0,n}\leq&  \min\{E_{\max}, E_n + \alpha_n T \eta P_n|h_{n,0}|^2\},\label{case6}
\end{align}
respectively. 
The constraints in \eqref{case5} and \eqref{case6} are more complicated than   \eqref{2st:1} and \eqref{3st:1}, which makes solving  the rate maximization problem more difficult.  

{\it Remark 2:} The  following two strategies will be  used as  benchmark schemes in this paper. For the first scheme,    termed the greedy algorithm,   ${\rm U}_0$ uses all the energy available for data transmission, and then starts energy harvesting. In particular, ${\rm U}_0$'s transmit power is fixed at $P_{\max}$, and $\alpha_n=\min\left\{1,\frac{E_n}{TP_{\max}}\right\}$ is adopted by the greedy algorithm. For the case that $TP_{\max}\geq E_n$, this choice of $\alpha_n$ means that all the energy available, $E_n$, will be used to power data transmission, i.e., $TP_{\max}\alpha_n=E_n$. For the second    benchmark scheme,    termed the random algorithm,     $P_{\max}$ is used as ${\rm U}_0$'s transmit power and $\alpha_n$ is uniformly   generated between $0$ and $\min\left\{1,\frac{E_n}{TP_{\max}}\right\}$. Because  $\alpha_n\leq\frac{E_n}{TP_{\max}}$, it is guaranteed that there is enough energy for ${\rm U}_0$ to transmit during $\alpha_nT$ seconds with transmit power $P_{\max}$.

{\it Remark 3:} Problem \ref{pb:1} cannot be straightforwardly  solved by applying conventional convex optimization, not only because its equality constraint in \eqref{1st:1} is not affine, but also because its objective function is a long-term non-convex   throughput function. However, Problem \ref{pb:1} is   ideally suited  for the application of machine learning, as discussed in the following section.   

\section{Problem Reformulation for Application of Reinforcement Learning} \label{section III}
\subsection{Rationale Behind the Application of  Reinforcement Learning}
 {For the considered optimization problem,   `no-pain no-gain'   decisions, i.e., decisions which  result in  short-term losses  but   yield long-term gains, have to be made. These decisions motivate the application of reinforcement learning.     In the following, a simple   example with two primary users,  i.e., ${\rm U}_1$ and ${\rm U}_2$,  is used to illustrate the `no-pain no-gain'  situations  inherent to the considered CR-NOMA scenario. }

In particular, assume that  ${\rm U}_1$ has  very strong channels to both ${\rm U}_0$ and the base station, i.e., $|h_{1}|^2\rightarrow \infty$ and $|h_{1,0}|^2\rightarrow \infty$,  whereas ${\rm U}_2$ has very weak  channels to both ${\rm U}_0$ and the base station, i.e., $|h_{2}|^2\rightarrow 0$ and $|h_{2,0}|^2\rightarrow 0$.  Intuition in  this situation suggests  to encourage  the secondary user to do the following:
\begin{itemize}
\item Use   most of the time slot for energy harvesting when  ${\rm U}_1$ transmits. The reason is that $|h_{1}|^2\rightarrow \infty$ means that ${\rm U}_1$ is a strong interference source for ${\rm U}_0$'s data transmission,  and $|h_{1,0}|^2\rightarrow \infty$ means an   opportunity to harvest a large amount of energy from ${\rm U}_1$.

\item Use   most of the time slot for data transmission when ${\rm U}_2$ transmits. The reason is that  $|h_{2}|^2\rightarrow 0$ implies   an interference free situation, and hence it is possible to achieve  a high  data rate,  as indicated by \eqref{raten}.  $|h_{1,0}|^2\rightarrow 0$ implies that only a small  amount of energy can be harvested from  ${\rm U}_2$. 
\end{itemize}

Note that the actions following this   intuition  might incur    short-term losses compared to the benchmark schemes. For example, when ${\rm U}_1$ transmits, a rationale  action is to ask ${\rm U}_0$ to carry out energy harvesting only, which leads to  zero data rate     in this particular time slot.  In other words, in the time slots where ${\rm U}_1$ transmits, the data rate achieved by  the action following the   intuition  is smaller than the data rates  achieved by  the benchmark schemes. However, this temporary sacrifice yields a large amount of harvested energy  and hence a potential long-term gain.  
Reinforcement learning is an ideal    tool for emulating   this   intuition by learning  the  feature inherent  to the considered problem, as shown in the remainder of this paper.

\subsection{Problem Reformulation}
 {For the throughput maximization problem formulated  in \ref{pb:1},  there are two sets of optimization variables, namely $P_{0,n}$ and $\alpha_n$, and  the value ranges of $P_{0,n}$ and $\alpha_n$ can be quite   different, i.e., $0\leq P_{0,n}\leq P_{\max}$ and $0\leq \alpha_n \leq 1$, which make a direct application of DDPG challenging. In the following, problem \ref{pb:1} will  be decomposed into two simpler optimization subproblems, in order to facilitate the application of DDPG.}  First, similar to \cite{9108195}, we introduce an energy fluctuation parameter, which is defined as the difference between the energy consumed and the energy harvested at $t_n$: 
\begin{align}
\bar{E}_n = (1-\alpha_n) T \eta P_n|h_{n,0}|^2 - \alpha_n TP_{0,n}.
\end{align}
$\bar{E}_n$ can be interpreted as the energy surplus (or deficit)  at $t_n$ if $\bar{E}_n>0$ (or  $\bar{E}_n<0$).  We note that for a fixed $\bar{E}_n $,     the data rate at $t_n$, $R_n$,   depends on the power allocation coefficient and the time-sharing parameter at $t_n$ only, where the parameters for the other  time slots, i.e., $\alpha_m$ and $P_{0,m}$, $m\neq n$, will not have any impact on $R_n$.  This observation can be clearly illustrated by recasting problem \ref{pb:1} in the following equivalent form \cite[Page 133]{Boyd}: {
 \begin{problem}\label{pb:1x}
  \begin{alignat}{2}
\underset{ \pi_{ \bar{E}_n}}{\rm{max}} &\quad   
\mathcal{E} \left\{ \left. \sum^{\infty}_{n=1}\gamma^{n-1} R_n \right| \pi_{ \bar{E}_n},  {s}_1 \right\}
\label{obj:1x} \\
\rm{s.t.} & \quad  
 E_{n+1} = \min\left\{E_{\max}, E_n + \bar{E}_n\right\}  
\label{1st:1x} ,  \end{alignat}
\end{problem} 
where  the subscript  of   policy $ \pi_{ \bar{E}_n}$  highlights the fact that the secondary user's action at $t_n$ is to choose $ \bar{E}_n$ (instead of $P_{0,n}$), $\alpha_n$,} and  $\tilde{R}_n$ is  defined as follows:
\begin{align}\nonumber
\tilde{R}_n=&\sup \left\{  {R}_n|   (1-\alpha_n) T \eta P_n|h_{n,0}|^2 - \alpha_n TP_{0,n}=\bar{E}_n ,   
\eqref{2st:1} , \eqref{3st:1} , \eqref{4st:1}\right\}.
\end{align}

Therefore, problem \ref{pb:1} can be solved by first solving the following optimization problem:\footnote{ {For the    subproblem shown in \ref{pb:2}, only the variables       for $t_n$ are involved, i.e., the secondary user's data rate at $t_n$ is maximized by optimizing  its transmit power   $P_{0,n}$ and its time-sharing parameter $\alpha_n$ at $t_n$. Therefore, only the causal CSI assumption is require, and the   optimal solution of problem \ref{pb:2} is  applicable regardless of whether the channels are  time-varying or contant.}}
 \begin{problem}\label{pb:2}
  \begin{alignat}{2}
\underset{P_{0,n}, \alpha_n}{\rm{max}} &\quad   
  R_n 
\label{obj:2} \\
\rm{s.t.} & \quad   
   (1-\alpha_n) T \eta P_n|h_{n,0}|^2 - \alpha_n TP_{0,n}=\bar{E}_n \label{1st:2}
\\\nonumber
& \quad 
\eqref{2st:1} , \eqref{3st:1} , \eqref{4st:1}.
  \end{alignat}
\end{problem} 
Denote    the optimal solutions of problem \ref{pb:2} by $\alpha_n^*(\bar{E}_n)$ and $P_{0,n}^*(\bar{E}_n)$, where $\alpha_n^*(\bar{E}_n)$ and $P_{0,n}^*(\bar{E}_n)$ are expressed as functions of $\bar{E}_n$ in order to highlight the fact that the optimal solutions,  $\alpha_n$ and $P_{0,n}$, are obtained for a given  $\bar{E}_n$. The closed-form expressions for $\alpha_n^*(\bar{E}_n)$ and $P_{0,n}^*(\bar{E}_n)$ are then substituted in problem \ref{pb:1x}, which yields the following optimization problem:\footnote{ {We note that \ref{pb:3} will be solved by applying   DRL which is well known for its applicability  in time-varying environments. As a result, the proposed DRL scheme can be directly applied if the channels are   time-varying,  as demonstrated in Section \ref{section VI}. Furthermore, we note that   the use of DRL to solve \ref{pb:3} requires   causal CSI   only, since the principle of DRL is to generate a decision by sensing the current state of the environment and using the past experience, where the knowledge  of the environments in the future is not needed.  }} {
 \begin{problem}\label{pb:3}
  \begin{alignat}{2}
\underset{\pi_{ \bar{E}_n}}{\rm{max}} &\quad    
 \mathcal{E} \left\{ \left. \sum^{\infty}_{n=1}\gamma^{n-1} R_n\left(\alpha_n^*(\bar{E}_n),P_{0,n}^*(\bar{E}_n)\right)   \right| \pi_{ \bar{E}_n},  {s}_1 \right\}
\label{obj:3} \\
\rm{s.t.} & \quad  E_{n+1} = \min\left\{E_{\max}, E_n + \bar{E}_n\right\}
\label{1st:3}
.
  \end{alignat}
\end{problem}} 
where $R_n\left(\alpha_n^*(\bar{E}_n),P_{0,n}^*(\bar{E}_n)\right)$ is a function of $\bar{E}_n$ only, i.e., $R_n\left(\alpha_n^*(\bar{E}_n),P_{0,n}^*(\bar{E}_n)\right)=\alpha^*(\bar{E}_n) \ln\left( 1+\frac{P_{0,n}^*(\bar{E}_n)|g_{0,n}|^2}{1 + P_n|h_n|^2}
\right)$. 

In Section \ref{section IV}, problem \ref{pb:2} will be solved via convex optimization, where  closed-form expressions for $\alpha_n^*(\bar{E}_n)$ and $P_{0,n}^*(\bar{E}_n)$ will be presented. In Section \ref{section V}, problem \ref{pb:3} will be reformulated as a reinforcement learning problem and the DDPG algorithm will be applied to find the solution. 

%{\it Remark 4:} this is not alternating optimization. 

%For $\bar{E}_n$, it is bounded as follows:
%\begin{align}\label{bound for En}
%\underset{using- all}{-\min\{E_n,TP_{\max}\}}\leq \bar{E}_n\leq \min\left\{\underset{space-to-fill}{E_{\max}-E_n}, \underset{all-for-EH}{ T \eta P_n|h_{n,0}|^2 }\right\}
%\end{align}

\section{Finding Closed-Form Expressions for $\alpha_n^*(\bar{E}_n)$ and $P_{0,n}^*(\bar{E}_n)$} \label{section IV}

%We note that the   two constraints in \eqref{2st:1} and \eqref{1st:2} can be potentially duplicated 
%\begin{align}
%  \alpha_nTP_{0,n}&= (1- \alpha_n) T \eta P_n|h_{n,0}|^2 -  \bar{E}_n\\\nonumber
%\alpha_nTP_{0,n}&\leq   E_n
%\end{align}
%This leads to the following two cases. 
%
%\subsubsection{$E_n<E_{EH}-\bar{E}_n$} If $ (1- \alpha_n) T \eta P_n|h_{n,0}|^2 -  \bar{E}_n>E_n$

In order to find   closed-form expressions for  $\alpha_n^*(\bar{E}_n)$ and $P_{0,n}^*(\bar{E}_n)$, we first   recast  \eqref{pb:2} as follows:
 \begin{problem}\label{pb:35}
  \begin{alignat}{2}
\underset{P_{0,n}, \alpha_n}{\rm{max}} &\quad   f_0\left(P_{0,n}, \alpha_n\right) \triangleq 
  \alpha_n \ln\left( 1+\frac{P_{0,n}|g_{0,n}|^2}{1 + P_n|h_n|^2}
\right)
\label{obj:35} \\
\rm{s.t.} & \quad   \label{1st:35}
 f_1\left(P_{0,n}, \alpha_n\right) = 0, 
 \\ 
&  \quad 
 f_2\left(P_{0,n}, \alpha_n\right) \leq 0, 
\\ 
&  \quad 
\eqref{3st:1}, \eqref{4st:1}.
  \end{alignat}
\end{problem} 
where 
$f_1\left(P_{0,n}, \alpha_n\right) =  (1- \alpha_n) T \eta P_n|h_{n,0}|^2 - \alpha_nTP_{0,n}- \bar{E}_n$ and $f_2\left(P_{0,n}, \alpha_n\right)  = \alpha_nTP_{0,n}-   E_n $.

{\it Remark 4:} Note that problem \ref{pb:35} is not   jointly convex in $P_{0,n}$ and $\alpha_n$. For example, its equality constraint, \eqref{1st:35}, is not an affine function. A similar optimization problem was considered  in \cite{9108195}, where the equality constraint was relaxed to an inequality  constraint. However, in the expression for the relaxed inequality constraint shown in \cite[Eq. (13)]{9108195}, there is still a term which involves  the multiplication of two optimization variables, which means that the inequality function is not convex and hence the relaxed problem is not in a convex form.

According \cite[Page 133]{Boyd},  an equivalent form of problem \eqref{pb:35} can be found as follows:
 \begin{problem}\label{pb:4}
  \begin{alignat}{2}
\underset{  \alpha_n}{\rm{max}} &\quad     \tilde{f}_0\left( \alpha_n\right) 
\label{obj:4} \\
\rm{s.t.} &  \quad 
0\leq \alpha_n\leq 1 ,
  \end{alignat}
\end{problem} 
where $ \tilde{f}_0\left( \alpha_n\right) $ is defined as follows:
\begin{align}
 \tilde{f}_0\left( \alpha_n\right)  = \sup &\left\{ f_0\left(P_{0,n}, \alpha_n\right)|  f_1\left(P_{0,n}, \alpha_n\right) = 0,   f_2\left(P_{0,n}, \alpha_n\right) \leq  0, P_{0,n}\geq 0 \right\}.
\end{align}
Finding $ \tilde{f}_0\left( \alpha_n\right)$ is equivalent to solving the following optimization problem:
 \begin{problem}\label{pb:5}
  \begin{alignat}{2}
\underset{ P_{0,n}}{\rm{max}} &\quad     \alpha_n \ln\left( 1+\frac{P_{0,n}|g_{0,n}|^2}{1 + P_n|h_n|^2}
\right)
\label{obj:5} \\
\rm{s.t.} &  \quad 
  P_{0,n} = \frac{(1- \alpha_n) \eta P_n|h_{n,0}|^2}{\alpha_n} -  \frac{\bar{E}_n}{\alpha_nT}
 \label{1st:5}
 \\
  & \quad  P_{0,n}\leq  \frac{ E_n}{ \alpha_nT}
  \label{2st:5}
   \\
  & \quad 0\leq P_{0,n}\leq  P_{\max}
  \label{3st:5}.
  \end{alignat}
\end{problem} 
Note that problem \ref{pb:5} is a function of $P_{0,n}$ only, where $\alpha_n$ is fixed. 
By using the equality constraint in \eqref{1st:5}, the optimal value of  $ \tilde{f}_0\left( \alpha_n\right) $ can be found as follows:
\begin{align}
 \tilde{f}^*_0\left( \alpha_n\right)  =   \alpha_n \ln\left( 1+\frac{  \frac{(1- \alpha_n) \eta P_n|h_{n,0}|^2|g_{0,n}|^2 }{\alpha_n}-  \frac{\bar{E}_n|g_{0,n}|^2 }{\alpha_nT} }{1 + P_n|h_n|^2}
\right),
\end{align}
where the constraints in \eqref{2st:5} and \eqref{3st:5} can be guaranteed by expressing  the domain of $ \tilde{f}_0\left( \alpha_n\right)$ as follows:
\begin{align}
\mathcal{D} = &\left\{\alpha_n| 0\leq    \frac{(1- \alpha_n) \eta P_n|h_{n,0}|^2}{\alpha_n} -  \frac{\bar{E}_n}{\alpha_nT}  \leq\min\left\{ P_{\max}, \frac{ E_n}{ \alpha_nT}\right\} \right\}.
\end{align}

By using $ \tilde{f}^*_0\left( \alpha_n\right) $ and showing   the constraints on the domain of the function explicitly, problem \eqref{pb:4} can be equivalently recast  as follows:
 \begin{problem}\label{pb:7}
  \begin{alignat}{2}
\underset{  \alpha_n}{\rm{max}} &\quad  \alpha_n \ln\left( 1+\frac{  \frac{(1- \alpha_n) \eta P_n|h_{n,0}|^2|g_{0,n}|^2 }{\alpha_n}-  \frac{\bar{E}_n|g_{0,n}|^2 }{\alpha_nT} }{1 + P_n|h_n|^2}
\right)
\label{obj:7} \\ 
\rm{s.t.} &  \quad 
   \alpha_n \leq 1-  \frac{\bar{E}_n}{ T\eta P_n|h_{n,0}|^2}  \label{1st:7} \\
  &  \quad 
  \alpha_n   \geq 1- \frac{ E_n+\bar{E}_n}{ T\eta P_n|h_{n,0}|^2} \label{2st:7} \\
    &  \quad 
\alpha_n\geq \frac{T\eta P_n|h_{n,0}|^2-  \bar{E}_n}{T\eta P_n|h_{n,0}|^2 + TP_{\max}}   \label{3st:7}\\
  &  \quad 
0\leq \alpha_n\leq 1 \label{4st:7}.
  \end{alignat}
\end{problem} 
Note that   problem \ref{pb:7} involves  three lower bounds and two upper bounds on $\alpha_n$. In addition, the domain of the objective function in \eqref{obj:7} also imposes a constraint on the choice of $\alpha_n$. Therefore, the problem might be infeasible  if one of the lower bounds is larger than one of the upper bounds, and hence it is important to carry out a feasibility study for problem \ref{pb:7}, as is done  in the following proposition.
\begin{proposition}\label{proposition1}
Problem \ref{pb:7} is always feasible.  
\end{proposition}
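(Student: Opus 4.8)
The plan is to show that the constraint set of Problem \ref{pb:7} is a non-empty interval of admissible $\alpha_n$, by comparing pairwise the lower and upper bounds it imposes. Problem \ref{pb:7} places on $\alpha_n$ exactly two upper bounds, $\alpha_n\leq 1$ (from \eqref{4st:7}) and $\alpha_n\leq 1-\frac{\bar{E}_n}{T\eta P_n|h_{n,0}|^2}$ (from \eqref{1st:7}), and exactly three lower bounds, $\alpha_n\geq 0$ (from \eqref{4st:7}), $\alpha_n\geq 1-\frac{E_n+\bar{E}_n}{T\eta P_n|h_{n,0}|^2}$ (from \eqref{2st:7}), and $\alpha_n\geq \frac{T\eta P_n|h_{n,0}|^2-\bar{E}_n}{T\eta P_n|h_{n,0}|^2+TP_{\max}}$ (from \eqref{3st:7}); the bound \eqref{1st:7} is, via \eqref{1st:5}, equivalent to $P_{0,n}^*(\bar{E}_n)\geq 0$, so it also encodes the domain restriction that the objective \eqref{obj:7} places on $\alpha_n$. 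Consequently the feasible set equals $[\ell,u]$, with $\ell$ the largest of the three lower bounds and $u$ the smallest of the two upper bounds, and it is non-empty precisely when every one of the three lower bounds is at most every one of the two upper bounds, i.e.\ when $3\times 2=6$ scalar inequalities hold.

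To verify these six inequalities I would first record the admissible range of the energy-fluctuation parameter $\bar{E}_n$. It equals the harvested energy $(1-\alpha_n)T\eta P_n|h_{n,0}|^2$, which lies in $[0,\,T\eta P_n|h_{n,0}|^2]$ with the maximum attained at $\alpha_n=0$, minus the consumed energy $\alpha_nTP_{0,n}$, which is non-negative and, by \eqref{2st:1} together with $P_{0,n}\leq P_{\max}$, at most $\min\{E_n,TP_{\max}\}$; hence $-\min\{E_n,TP_{\max}\}\leq \bar{E}_n\leq T\eta P_n|h_{n,0}|^2$, which is exactly the interval over which the secondary user's action $\bar{E}_n$ in Problem \ref{pb:3} is defined. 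Writing $A\triangleq T\eta P_n|h_{n,0}|^2>0$ and $B\triangleq TP_{\max}>0$, the six comparisons reduce to: $0\leq 1$; $0\leq 1-\bar{E}_n/A$ (from $\bar{E}_n\leq A$); $1-(E_n+\bar{E}_n)/A\leq 1$ (from $\bar{E}_n\geq -E_n$); $1-(E_n+\bar{E}_n)/A\leq 1-\bar{E}_n/A$ (from $E_n\geq 0$); $(A-\bar{E}_n)/(A+B)\leq 1$ (from $\bar{E}_n\geq -B$); and $(A-\bar{E}_n)/(A+B)\leq (A-\bar{E}_n)/A$ (from $A-\bar{E}_n\geq 0$ and $0<A\leq A+B$). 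All six hold, so $\ell\leq u$ and Problem \ref{pb:7} is feasible; the extreme case $\bar{E}_n=A$ simply gives $\ell=u=0$, i.e.\ the feasible choice $\alpha_n=0$, for which $R_n=0$.

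The step I expect to be the main obstacle --- indeed the only genuinely non-mechanical one --- is identifying the correct \emph{lower} end of the range of $\bar{E}_n$ as $-\min\{E_n,TP_{\max}\}$ rather than merely $-E_n$: the inequality $(A-\bar{E}_n)/(A+B)\leq 1$, which is exactly the condition that the transmit-power constraint $P_{0,n}\leq P_{\max}$ can be satisfied together with $\alpha_n\leq 1$, requires $\bar{E}_n\geq -TP_{\max}$, and this does not follow from the battery-causality bound $\bar{E}_n\geq -E_n$ alone when $E_n>TP_{\max}$. Thus the proof must invoke \emph{both} the transmit-power cap $P_{0,n}\leq P_{\max}$ and the energy-causality constraint \eqref{2st:1} to bound the consumed energy from above; after that, every remaining step is an elementary manipulation of fractions with strictly positive denominators, so I anticipate no further difficulty. (If $T\eta P_n|h_{n,0}|^2=0$ one reads Problem \ref{pb:7} off its pre-division form \ref{pb:35}, where $\bar{E}_n\leq 0$ and $\alpha_n=1$, $P_{0,n}=-\bar{E}_n/T$ is feasible.)
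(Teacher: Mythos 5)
Your proposal is correct and follows essentially the same route as the paper's proof: you bound $\bar{E}_n$ by $-\min\{E_n,TP_{\max}\}\leq \bar{E}_n\leq T\eta P_n|h_{n,0}|^2$ (including the crucial $-TP_{\max}$ term coming from $P_{0,n}\leq P_{\max}$), check pairwise that each of the three lower bounds on $\alpha_n$ does not exceed either upper bound, and observe that constraint \eqref{1st:7} subsumes the domain restriction of the objective via $P_{0,n}\geq 0$ in \eqref{1st:5}. The only differences are cosmetic (organizing the check as six scalar inequalities, and the extra edge cases $\bar{E}_n=T\eta P_n|h_{n,0}|^2$ and $|h_{n,0}|^2=0$), so no gap.
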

\begin{proof} 
See Appendix \ref{proof1}. 
\end{proof}

{\it Remark 5:} For the   case $\bar{E}_n=T\eta P_n|h_{n,0}|^2$, the upper bound in \eqref{1st:7} becomes $0$, which means that the only feasible solution for this case is $\alpha_n^*(\bar{E}_n)=0$. Therefore, no data transmission happens and $R_n=0$ for this case. We note that the case with   $\bar{E}_n=T\eta P_n|h_{n,0}|^2$ can cause  a singularity issue when analyzing the convexity of the objective function. Therefore, unless otherwise stated, it is assumed that $\bar{E}_n\neq T\eta P_n|h_{n,0}|^2$ for the remainder of this section. 

Another important step in finding  the optimal solution of  problem \ref{pb:7} is to study the convexity of its objective function,   which is done  in the following proposition.

\begin{proposition}\label{proposition2}
The objective function of problem \ref{pb:7} is a concave function of $\alpha_n$, for $\alpha_n\geq 0$. 
\end{proposition}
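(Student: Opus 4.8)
The plan is to absorb every channel- and energy-dependent quantity into two constants and reduce the claim to a one-line second-derivative check. Set $C \triangleq \frac{|g_{0,n}|^2}{1+P_n|h_n|^2}>0$, $A \triangleq C\!\left(\eta P_n|h_{n,0}|^2-\frac{\bar{E}_n}{T}\right)$, and $B \triangleq 1-C\eta P_n|h_{n,0}|^2$. Then the objective in \eqref{obj:7} becomes $g(\alpha_n) \triangleq \alpha_n\ln\!\left(B+\frac{A}{\alpha_n}\right) = \alpha_n\ln(B\alpha_n+A)-\alpha_n\ln\alpha_n$, whose domain is exactly the set of $\alpha_n>0$ with $B\alpha_n+A>0$ (the argument of the logarithm equals $\frac{B\alpha_n+A}{\alpha_n}$); this set contains $\mathcal{D}$, on which the argument is in fact at least $1$. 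Note also that, since $\bar{E}_n=(1-\alpha_n)T\eta P_n|h_{n,0}|^2-\alpha_n TP_{0,n}\leq T\eta P_n|h_{n,0}|^2$ and the degenerate case $\bar{E}_n=T\eta P_n|h_{n,0}|^2$ is excluded by Remark 5, we have $A>0$.

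Next I would differentiate twice. Using $\frac{d}{d\alpha_n}\big[\alpha_n\ln(B\alpha_n+A)\big]=\ln(B\alpha_n+A)+\frac{B\alpha_n}{B\alpha_n+A}$ and $\frac{d}{d\alpha_n}\big[-\alpha_n\ln\alpha_n\big]=-\ln\alpha_n-1$, a short calculation gives
\[
g''(\alpha_n)=\frac{B^2\alpha_n+2AB}{(B\alpha_n+A)^2}-\frac{1}{\alpha_n}=\frac{\alpha_n(B^2\alpha_n+2AB)-(B\alpha_n+A)^2}{\alpha_n(B\alpha_n+A)^2}=\frac{-A^2}{\alpha_n(B\alpha_n+A)^2}.
\]
On the domain of $g$ one has $\alpha_n>0$ and $(B\alpha_n+A)^2>0$, hence $g''(\alpha_n)\leq 0$ (indeed $<0$, because $A\neq 0$), so $g$ is concave on the interior of its domain; continuity of $g$ at $\alpha_n=0$, where $g(0)=\lim_{\alpha_n\to 0^+}\alpha_n\ln(A/\alpha_n)=0$, extends concavity to the closed domain, which is the assertion of the proposition.

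A shorter, derivative-free argument that I would also mention: the map $(x,t)\mapsto t\ln(x/t)$ is the perspective of the concave function $\ln(\cdot)$ and is therefore jointly concave on $\{x>0,\,t>0\}$; composing it with the affine map $\alpha_n\mapsto(B\alpha_n+A,\,\alpha_n)$ preserves concavity, and this composition is precisely $g(\alpha_n)=\alpha_n\ln\frac{B\alpha_n+A}{\alpha_n}$. The only point requiring some care, and hence the main (rather mild) obstacle, is the domain bookkeeping: ``for $\alpha_n\geq 0$'' in the statement must be read as ``wherever the objective is real,'' i.e.\ on $\{\alpha_n\geq 0:\ B\alpha_n+A>0\}$, which may be a proper subinterval of $[0,\infty)$ when $B<0$, and one must check that the case $\bar{E}_n=T\eta P_n|h_{n,0}|^2$ set aside in Remark 5 is genuinely harmless (there $A=0$ and the feasible set collapses to $\{\alpha_n=0\}$). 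Beyond this, nothing substantive stands in the way; the essential content of the proposition is the identity $g''(\alpha_n)=-A^2/[\alpha_n(B\alpha_n+A)^2]\le 0$.
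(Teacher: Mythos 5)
Your proposal is correct and takes essentially the same route as the paper's own proof: with your $A=\kappa_1-\kappa_2$ and $B=1-\kappa_1$ it is exactly the paper's reparametrization, and the paper likewise establishes concavity by computing the second derivative and arriving at the same expression $\frac{-A^2}{\alpha_n\left(B\alpha_n+A\right)^2}\leq 0$ for $\alpha_n\geq 0$, under the same exclusion of the degenerate case $\bar{E}_n=T\eta P_n|h_{n,0}|^2$. Your side remark that the objective is the perspective of $\ln(\cdot)$ composed with an affine map is a valid derivative-free alternative the paper does not use, but the core argument is identical.
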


\begin{proof}
See Appendix \ref{proof2}.
\end{proof}

Given the two useful properties provided  in Propositions \ref{proposition1} and \ref{proposition2}, it can be easily verified that problem \ref{pb:7} is a concave problem; however, finding a closed-form solution for problem \ref{pb:7}  is challenging, due to the multiple constraints on the choices of $\alpha_n$ as shown in \eqref{1st:7}, \eqref{2st:7}, \eqref{3st:7},  and \eqref{4st:7}. It is important to point out that these constraints cannot be merged. For example,    the constraint in \eqref{2st:7} is not always  stricter than $\alpha_n\geq 0$ shown in  \eqref{4st:7} since
\begin{align}
 1- \frac{ E_n+\bar{E}_n}{ T\eta P_n|h_{n,0}|^2} &=   \frac{T\eta P_n|h_{n,0}|^2 -  E_n-\bar{E}_n}{ T\eta P_n|h_{n,0}|^2}  
\end{align} 
which can be either negative or positive.   As a result, finding the optimal solution by directly applying the Karush-Kuhn-Tucker (KKT) conditions to problem \ref{pb:7} is difficult, since  five dual variables are required due to the large number of inequality constraints.  Furthermore, the fact that the root of the first-order derivative of the objective function is in the form of the Lambert W function with two branches makes it more challenging to find  a closed-form expression for the optimal solution.  

Nevertheless, by using the properties of the inequality constraints of problem \ref{pb:7}, a closed-form optimal solution can be found, as shown in the following lemma. 

\begin{lemma}\label{lemma1}
The optimal solution for problem \ref{pb:7} can be expressed as follows:
\begin{align}\label{alphaoptimal}
\alpha_n^*(\bar{E}_n) =  \min\left\{1, \max\left\{ x^*, \theta_0\right\} \right\}, 
\end{align}
if $\bar{E}_n\neq T\eta P_n|h_{n,0}|^2$, otherwise $\alpha_n^*(\bar{E}_n) =0$,
where  
$x^*= \frac{\kappa_1-\kappa_2}{e^{ W_0\left(e^{-1}(\kappa_1-1)\right)+1}  - 1+\kappa_1}$,  $\theta_0=\max\left\{1- \frac{ E_n+\bar{E}_n}{ T\eta P_n|h_{n,0}|^2},  \frac{T\eta P_n|h_{n,0}|^2-  \bar{E}_n}{T\eta P_n|h_{n,0}|^2 + TP_{\max}}\right\}$,  $\kappa_1=\frac{ \eta P_n|h_{n,0}|^2 |g_{0,n}|^2}{1 + P_n|h_n|^2}$,  $\kappa_2=\frac{ \bar{E}_n|g_{0,n}|^2}{T(1 + P_n|h_n|^2)}  $, and $W_0(\cdot)$ denotes the principle branch of the Lambert W function \cite{GRADSHTEYN}. 
\end{lemma}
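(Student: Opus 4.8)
The plan is to exploit the fact that, by Propositions~\ref{proposition1} and \ref{proposition2}, problem~\ref{pb:7} maximizes a concave function of the single scalar $\alpha_n$ over a nonempty closed interval, so its optimizer is just the clamping of the unconstrained stationary point onto that interval. I would therefore (i) find the unconstrained stationary point $x^*$ of the objective in \eqref{obj:7} and show $x^*\in(0,\bar{\alpha})$ with $\bar{\alpha}\triangleq 1-\frac{\bar{E}_n}{T\eta P_n|h_{n,0}|^2}$; (ii) show that the constraints \eqref{1st:7}--\eqref{4st:7} together with the domain of the objective collapse to $\alpha_n\in[\theta_0,\min\{1,\bar{\alpha}\}]$; and (iii) conclude by the ``median'' rule for concave maximization over an interval, handling $\bar{E}_n=T\eta P_n|h_{n,0}|^2$ as a trivial special case.

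For (i), I would write the argument of the logarithm in \eqref{obj:7} as $u(\alpha_n)=(1-\kappa_1)+\frac{\kappa_1-\kappa_2}{\alpha_n}$, so the objective is $g(\alpha_n)=\alpha_n\ln u(\alpha_n)$. After differentiating and using $\frac{\kappa_1-\kappa_2}{\alpha_n}=u-1+\kappa_1$ to eliminate $\alpha_n$, the stationarity condition $g'(\alpha_n)=0$ becomes $\ln u-1+\frac{1-\kappa_1}{u}=0$, i.e., $u\ln u=u-1+\kappa_1$. Substituting $u=e^{v}$ and shifting to $w=v-1$ turns this into $w\,e^{w}=e^{-1}(\kappa_1-1)$; since $\kappa_1>0$ this has a unique solution with $w>-1$, namely $w=W_0\big(e^{-1}(\kappa_1-1)\big)$ (the other branch $W_{-1}$ gives $w\le-1$, i.e., $u=e^{w+1}\le1$, which is inadmissible). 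Inverting $u\mapsto\alpha_n$ then recovers exactly $x^*=\frac{\kappa_1-\kappa_2}{e^{W_0(e^{-1}(\kappa_1-1))+1}-1+\kappa_1}$. To place $x^*$ in $(0,\bar{\alpha})$ I would note that $\alpha_n\mapsto u(\alpha_n)$ is strictly decreasing and maps $(0,\bar{\alpha})$ onto $(1,\infty)$ (one checks $u(0^+)=+\infty$ using $\kappa_1>\kappa_2$, and $u(\bar{\alpha})=1$), so the admissible root $u>1$ corresponds to a unique $x^*\in(0,\bar{\alpha})$; equivalently, $g$ is concave with $g(0^+)=g(\bar{\alpha})=0$ and $g>0$ in between, forcing its peak to be interior.

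For (ii), $P_{0,n}\ge0$ (the left half of the domain of the objective) and constraint \eqref{1st:7} both amount to $\alpha_n\le\bar{\alpha}$, while \eqref{4st:7} gives $\alpha_n\le1$, so the upper endpoint is $\min\{1,\bar{\alpha}\}$; meanwhile $P_{0,n}\le P_{\max}$ and $P_{0,n}\le E_n/(\alpha_nT)$ reproduce exactly \eqref{3st:7} and \eqref{2st:7}, whose maximum is $\theta_0$. I would then check $\theta_0\ge0$ -- the bound in \eqref{3st:7} equals $\frac{T\eta P_n|h_{n,0}|^2\,\bar{\alpha}}{T\eta P_n|h_{n,0}|^2+TP_{\max}}\ge0$ since $\bar{\alpha}\ge0$ -- so the bound $\alpha_n\ge0$ in \eqref{4st:7} is inactive, and the feasible set is $[\theta_0,\min\{1,\bar{\alpha}\}]$, nonempty by Proposition~\ref{proposition1} (hence $\theta_0\le\bar{\alpha}$). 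For (iii), the maximizer of the concave $g$ over $[\theta_0,\min\{1,\bar{\alpha}\}]$ is $\mathrm{median}\big(\theta_0,x^*,\min\{1,\bar{\alpha}\}\big)=\min\{\min\{1,\bar{\alpha}\},\max\{\theta_0,x^*\}\}$; since $x^*<\bar{\alpha}$ and $\theta_0\le\bar{\alpha}$ give $\max\{\theta_0,x^*\}\le\bar{\alpha}$, this collapses to $\min\{1,\max\{x^*,\theta_0\}\}$, which is \eqref{alphaoptimal}. When $\bar{E}_n=T\eta P_n|h_{n,0}|^2$, $\bar{\alpha}=0$ forces the feasible set to be $\{0\}$, giving $\alpha_n^*(\bar{E}_n)=0$.

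The hard part will be step (i): differentiating $g$ and, above all, massaging $g'(\alpha_n)=0$ into the form $w\,e^{w}=e^{-1}(\kappa_1-1)$ so that the Lambert W function appears, together with the branch-selection argument that isolates $W_0$ and the verification $x^*\in(0,\bar{\alpha})$. A secondary, fiddly obstacle is the bookkeeping in step (ii) -- confirming that the domain constraint and $\alpha_n\ge0$ are genuinely redundant (dominated by \eqref{1st:7} and \eqref{3st:7} respectively), so that exactly the clean interval $[\theta_0,\min\{1,\bar{\alpha}\}]$ survives and the upper bound $\bar{\alpha}$ may be dropped from \eqref{alphaoptimal} because $x^*<\bar{\alpha}$.
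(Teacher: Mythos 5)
Your proposal is correct and follows essentially the same route as the paper's proof: use Propositions~\ref{proposition1} and \ref{proposition2} to reduce the problem to maximizing a concave scalar function over the interval $[\theta_0,\min\{1,\theta_1\}]$ with $\theta_1=1-\frac{\bar{E}_n}{T\eta P_n|h_{n,0}|^2}$, obtain the stationary point in closed form via the substitution leading to $we^{w}=e^{-1}(\kappa_1-1)$, select the $W_0$ branch using $0\le x^*\le\theta_1$ (the paper does this by contradiction through $f'(0)\ge 0$ and $f'(\theta_1)\le 0$, you do it by the equivalent monotonicity of $u(\alpha_n)$ on $(0,\theta_1)$), and then clamp onto the feasible interval, your median rule matching the paper's three-case discussion. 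The only cosmetic differences are your explicit observation that $\theta_0\ge 0$ (the paper keeps $\max\{0,\theta_0\}$, which is equivalent since $x^*\ge 0$) and the cleaner branch-selection phrasing; nothing substantive is missing.
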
 
\begin{proof}
See Appendix \ref{proof3}. 
\end{proof}
{\it Remark 6:} Given the closed-form expression for $\alpha_n^*(\bar{E}_n)$, the optimal power allocation coefficient is given by
\begin{align}\label{p0n}
P_{0,n}^*(\bar{E}_n)= \frac{(1- \alpha_n^*) \eta P_n|h_{n,0}|^2}{\alpha_n^*} -  \frac{\bar{E}_n}{\alpha_n^*T},
\end{align}
if $\bar{E}_n\neq T\eta P_n|h_{n,0}|^2$. We note that the case of $\bar{E}_n= T\eta P_n|h_{n,0}|^2$ is not discussed in the proof of Lemma \ref{lemma1}. As discussed in Remark 5, if $\bar{E}_n= T\eta P_n|h_{n,0}|^2$, the entire  time available    will be used for energy harvesting, and there is no data transmission, i.e., $\alpha_n^*(\bar{E}_n)=0$. For this special case, there is no need to specify  the value of $P_{0,n}^*(\bar{E}_n)$.

\section{A DDPG Approach to Optimize $\bar{E}_n$}\label{section V}
\subsection{A Brief Introduction to DDPG}
The ultimate goal of  DDPG is to   determine  an action, denoted by $a$,  which can maximize the action-value function, denoted by $Q(s,a)$, for a given state, denoted by $s$, via the following maximization problem \cite{suttonrl}:
\begin{align}\label{qsa1}
a^*(s) =\arg  \underset{a}\max \quad Q(s,a), 
\end{align}
which is similar to tabular  reinforcement learning algorithms, such as Q-learning and state-action-reward-state-action (SARSA) \cite{9114970}. But unlike Q-learning and SARSA, for DDPG, there is no need to build a table to store the values of  $Q(s,a)$. Instead, the action-value function is approximated by using neural networks, which is similar to    deep Q networks (DQN). Furthermore,  unlike Q-learning, SARSA, and DQN, DDPG is designed for the case when actions are continuous variables.

In particular,  the four neural networks used for DDPG are listed in the following \cite{ddpg}:
\begin{itemize} 
\item An actor  network (also termed a policy network), parameterized by $\omega_{\mu}$,    takes $s$ as its input  and outputs  an action which is denoted by  $\mu\left(s|\omega_{\mu}\right)$.

\item A target actor network, parameterized by $\omega_{\mu_t}$,   outputs $\mu_t\left(s|\omega_{\mu_t}\right)$. 

\item A critic network  (also termed a Q network), parameterized by $\omega_{c}$,  takes $s$ and $a$ as its inputs,  and outputs the corresponding state-value function, denoted by $Q\left(s,a|\omega_{c}\right)$.

\item A target critic network, parameterized by $\omega_{c_t}$,    outputs $Q_t\left(s,a|\omega_{c_t}\right)$.
\end{itemize} 
 The key features of DDPG are described in the following.

\subsubsection{Exploration} From  the user's perspective,  the actor network is the most important component since it provides the desired    solution.  In order to encourage the algorithm to explore the  environment, noise is added to the output of the actor network, which means that the action to be taken for state $s$ is given by \cite{ddpg}
\begin{align}
a(s) = \mu\left(s|\omega_{\mu}\right) +n,
\end{align}
where $n$ denotes the exploration noise. 

\subsubsection{Updating the  networks} While only the actor network is used to generate the needed action, the other three networks  are crucial to make   sure that the actor network is properly trained.  Assume that there exists  a tuple $(s, a, r, s_{\_})$, where $r$ is the reward if action $a$ is taken for the current state $s$ and $s_{\_}$ denotes the next state. Based on this  tuple, the networks are updated as follows \cite{ddpg}: 

\begin{itemize}
\item The actor network is trained by maximizing   the state-value function, as shown in \eqref{qsa1}. By using the parameters of the actor and critic networks, the objective function for the maximization problem can be rewritten as    $J\left(\omega_{\mu}\right)=Q\left(s,\left.a=\mu\left(s|\omega_{\mu}\right)\right |\omega_{c}\right)$. Given the fact that the action space is continuous and also assuming that the state-value function is differentiable, the parameters of the actor network, $\omega_{\mu}$, can be updated by carrying out gradient ascent. Note that gradient search requires the derivative of the objective function with respect to  $\omega_{\mu}$, where the following chain rule can be used:
\begin{align}\label{update actor}
\bigtriangledown_{\omega_{\mu}} J\left(\omega_{\mu}\right) = \bigtriangledown_{a} Q\left(s,\left.a\right |\omega_{c}\right)  \bigtriangledown_{\omega_{\mu}}\mu\left(s|\omega_{\mu}\right).
\end{align}
Therefore,   the output of the actor network can be used   as the input of the critic network, and the parameters of the actor network ($\omega_{\mu}$) are updated by maximizing the output of the critic network and fixing  the parameters of the critic network.

\item The critic network plays a crucial role in updating the actor network, as can be seen from \eqref{update actor}. The update of the critic network relies on   the two target networks. On the one hand,  by using the output of the target actor network as an input of the target critic network,    a target value for the state-value function is obtained as follows:
\begin{align}
y=   r  + \gamma  Q_{t}(s_{\_}, \mu_t\left(s_{\_}|\omega_{\mu_t}\right) | \omega_{c_t}) ,
\end{align}
where $\gamma$ denotes the  discount  parameter. 
On the other hand, another estimate for the state-value function can be generated by using the critic network, i.e., $Q(s, a | \omega_{c})$. The critic network can be updated by minimizing the loss defined as follows:
\begin{align}\label{lx1}
L = \left(y- Q(s, a | \omega_{c})\right)^2.
\end{align}

\item The two target networks have the same structure as their counterparts and their parameters  are   updated   as follows:
\begin{align}\hspace{-1em}
\omega_{c_t} \rightarrow \tau \omega_{c} +(1-\tau) \omega_{c_t},\quad \omega_{\mu_t} \rightarrow \tau \omega_{\mu} +(1-\tau) \omega_{\mu_t},
\end{align}  
where $\tau$ is the soft updating parameter. We note that the two target networks are updated   with a much lower frequency   than   their counterparts. 
\end{itemize}

\subsubsection{Replay Buffer} Similar to DQN, for DDPG, the past experience, i.e., multiple tuples from the past, $(s_i, a_i, r_i, s_{i,\_})$, are stored in a pool, termed the replay buffer. When the networks are updated, a fixed number  of the tuples are randomly selected from the buffer and used for network updating, which means that both \eqref{update actor} and \eqref{lx1} are carried out in a batch mode.

 \subsection{Application of DDPG to CR-NOMA}

 By using the closed-form expression for  $\alpha^*(\bar{E}_n)$ developed in Lemma \ref{lemma1} and $P_{0,n}^*(\bar{E}_n)$ shown in \eqref{p0n}, the long-term  throughput maximization problem can be rewritten as follows:  {
 \begin{problem}\label{pb:4x}
  \begin{alignat}{2}
\underset{\pi_{ \bar{E}_n}}{\rm{max}} &\quad    
 \mathcal{E} \left\{ \left. \sum^{\infty}_{n=1}\gamma^{n-1} \alpha^*(\bar{E}_n) \ln\left( 1+\frac{P_{0,n}^*(\bar{E}_n)|g_{0,n}|^2}{1 + P_n|h_n|^2}
\right)   \right| \pi_{ \bar{E}_n},  {s}_1 \right\}
\label{obj:4x} \\
\rm{s.t.} & \quad  E_{n+1} = \min\left\{E_{\max}, E_n + \bar{E}_n\right\}
\label{1st:4x}
.
  \end{alignat}
\end{problem} }
Because problem \ref{pb:4x} is a function of a single continuous variable, $\bar{E}_n$, it is ideally suited for  the application of DDPG. The key step for the application of reinforcement learning  is to define the state space, the action space, and the reward, as discussed in the following.

 \subsubsection{State space}  The state space   consists of  all possible states. As discussed in Section \ref{section model}, for the considered long-term system throughput maximization problem, a natural choice for the state is $
s_n = \begin{bmatrix}|g_{0,n}|^2& |h_n|^2& |h_{n,0}|^2  & E_n\end{bmatrix}^T$, which   includes the channel gains associated with the primary user served at $t_n$, the secondary user's channel to the base station, and the   energy available at the beginning of $t_n$.

\subsubsection{Action Space} The action space   consists of  all possible actions taken by the secondary user. For the considered throughput maximization problem, it is  natural   to use $\bar{E}_n$ as the action. For example,  if both $|h_n|^2$ and $|h_{n,0}|^2 $ are extremely strong, a good choice for the action is  $\bar{E}_n\geq 0$, such that  data transmission is avoided due to the strong interference caused by $\bar{\rm U}_n$, and to spend more time for energy harvesting due to the strong connection between $\bar{\rm U}_n$ and ${\rm U}_0$.

However, it is important to point out that the range of $\bar{E}_n$ can be quite large. In particular, as shown in the proof of Proposition \ref{proposition1},  the value of $\bar{E}_n$ is bounded by the two following extreme situations:
\begin{align}\label{bound for Enx}
 \underset{\text{No energy harvesting}}{\underbrace{-\min\{E_n,TP_{\max}\}}}\leq \bar{E}_n\leq
  \underset{\text{No data transmission}}{\underbrace{ \min\left\{ E_{\max}-E_n,  T \eta P_n|h_{n,0}|^2 \right\}}},
\end{align}
which can be a very large negative number and a very large positive number, respectively.  
Note that, compared to the existing work in \cite{9108195}, where   only  $-E_n$   is used as the lower bound on $\bar{E}_n$, we also include  $-TP_{\max}$  in the lower bound. This is due to the transmit power constraint $P_{0,n}\leq P_{\max}$, which means that the maximal energy consumed within $T$ seconds is $TP_{\max}$, if   there is sufficient  energy available  at the beginning of $t_n$, i.e., $E_n>TP_{\max}$.

In order to improve the stability of the used neural networks, it is preferable to constrain  the possible choices of the action within a small and fixed range, ideally between $0$ and $1$. By using the upper and lower bounds on $\bar{E}_n$ and introducing an auxiliary variable $\beta_n$, $0\leq \beta_n\leq 1$, $\bar{E}_n$ can be expressed as follows:  
\begin{align}\label{ennew}
\bar{E}_n =& \beta_n \min\left\{ E_{\max}-E_n,   T \eta P_n|h_{n,0}|^2 \right\} - (1-\beta_n)\min\{E_n,TP_{\max}\}. 
\end{align}
It is straightforward to verify that the lower bound in \eqref{bound for Enx} is used if $\beta_n=0$, and the upper bound can be realized by letting $\beta_n=1$. Therefore, $\beta$ is a suitable   action variable for  DDPG networks.  

\subsubsection{Reward} Because the objective function of the   optimization problem in \ref{pb:4x} is based on the secondary user's long-term data rate, it is natural to use $R_n$ as the reward at $t_n$. 
 
 By using the defined state space,   action space, and   reward, DDPG can be applied straightforwardly. In particular, each episode consists of multiple steps/iterations. During each step, the DDPG algorithm first generates an action according to the current state, finds the next state according to the chosen action, and then the DDPG agent starts learning by updating the four neural  networks as discussed in the previous subsection. The details for the DDPG implementation of DDPG can be found in \cite{myddpgcode}. 
 
%\vspace{-0.5em}
% \begin{algorithm}{Each episode consists multiple iterations  }
%\caption{ DDPG Algorithm} 
%\begin{algorithmic}[1]
% 
%\State Set $t=0$, $\mu_0=+\infty$  , $\delta\rightarrow 0$.  
%\While { $F(\mu_{t})<-\delta$  }
%\State $t=t+1$. 
%\State  Update $P_{n,1}^t$ and $P_{n,2}^t$ by using     $\mu=\mu_{t-1}$ in  \eqref{solution hnoma}. 
%\State Update    $F(\mu_t)$ by using $P_{n,1}^t$ and $P_{n,2}^t$. 
%\State Update $\mu_t$ as $\mu_t= \frac{\ln\left(1+|h_n|^2P^t_{n,2}\right)}{   N- D_m\ln\left(1+ e^{-\frac{N}{D_m}}P_{n,1}^t|h_n|^2 \right)} $
%
%\EndWhile
%\State \textbf{end}
%\State  $P_{n,1}^{*}=P_{n,1}^t$ and $P_{n,2}^{*}=P_{n,2}^t$. 
%\end{algorithmic}\label{algorithm}
%\end{algorithm} \vspace{-1em}

 \section{ Simulation Results}\label{section VI}
 In this section, we study the performance of the proposed DDPG assisted NOMA transmission strategy by using computer simulation results. In our simulations, the learning rates for the actor and critic networks are set as $0.002$ and $0.004$, respectively. The reward discount parameter is $\gamma=0.9$, the network updating parameter is $\tau=0.01$, and the batch size for the replay experience is $32$.    The base station is located at the origin of the x-y plane, and the secondary user is located at ($1$ m, $1$ m).  The primary users' transmit power is set as $P_n=30$ dBm, the initial energy in the battery is $E_{\max}=0.1$ J, $T=1$ s, $P_{\max}=0.1$ W, and the energy harvesting efficiency coefficient is set as $\eta=0.7$. The additive white Gaussian noise power spectral density is $-170$ dBm/Hz,   the total bandwidth is $1$ MHz,   the carrier frequency is $914$ MHz, the path loss model in \cite{127405} is used, and  the path loss exponent is $3$. 
 
Regarding   the neural networks, a simple neural network which consists of  $2$ hidden layers with $N_n=64$ nodes  each is used for the actor network, where  the     rectified linear activation function (ReLU) is used in the first hidden layer, and the hyperbolic tangent function is used for the second hidden layer and the output layer.  Since the critic network takes two inputs,  $s$ and $a$,   both   $s$ and $a$ are fed to two individual hidden layers with $64$ nodes each before they are concatenated and connected to another hidden layer with $64$ nodes.  ReLU is used in all the  layers of the critic network. The two target networks are built in the same manner as their counterparts.   The detailed  setups for the four employed neural networks can be found in \cite{myddpgcode}.  {We note that the use of DDPG results in a higher  computational complexity compared to  the two benchmark schemes outlined in Remark 2. Take the greedy  scheme as an example, which uses   all   available energy  for data transmission, and hence is computationally efficient to implement. For DDPG,   if the used simple actor network is fully trained, generating $\bar{E}_n$ given $s_n$ requires a moderate  computational complexity  of  $ \mathcal{O} \left( N_n^2\right)$, since the numbers of inputs,  outputs, and layers are much smaller than $N_n$, where $\mathcal{O}\left(\cdot\right)$ denotes the computational complexity operator  \cite{deeplearningmit}. The complexity analysis of the training stage  is challenging not only because it   depends on how many gradient iterations are carried out but also because the employed  critic network is not fully connected. Due to space limitations, a full complexity analysis and methods to reduce complexity will be treated as an important direction for future research.  It is expected that  the training stage of the DDPG entails a  high computational complexity, where a promising energy and computationally efficient mitigation  approach is to apply mobile edge computing (MEC) and allow  the secondary user to offload computations to the base station \cite{8016573}.     }

The concepts of steps, episodes, and experiments for  DDPG can be interpreted in the considered CR-NOMA context as follows. Each step represents a time slot,  each episode consists of $100$ time slots (or steps), and each experiment consists of multiple episodes.  At the beginning of each experiment, the neural networks are randomly initialized.  At the beginning of each episode, the secondary user's battery is reset to $E_{\max}$. 
   \begin{figure}[!t]\vspace{-3em}
\begin{center} \subfigure[  Experiment  I ]{\label{fig 0 a}\includegraphics[width=0.49\textwidth]{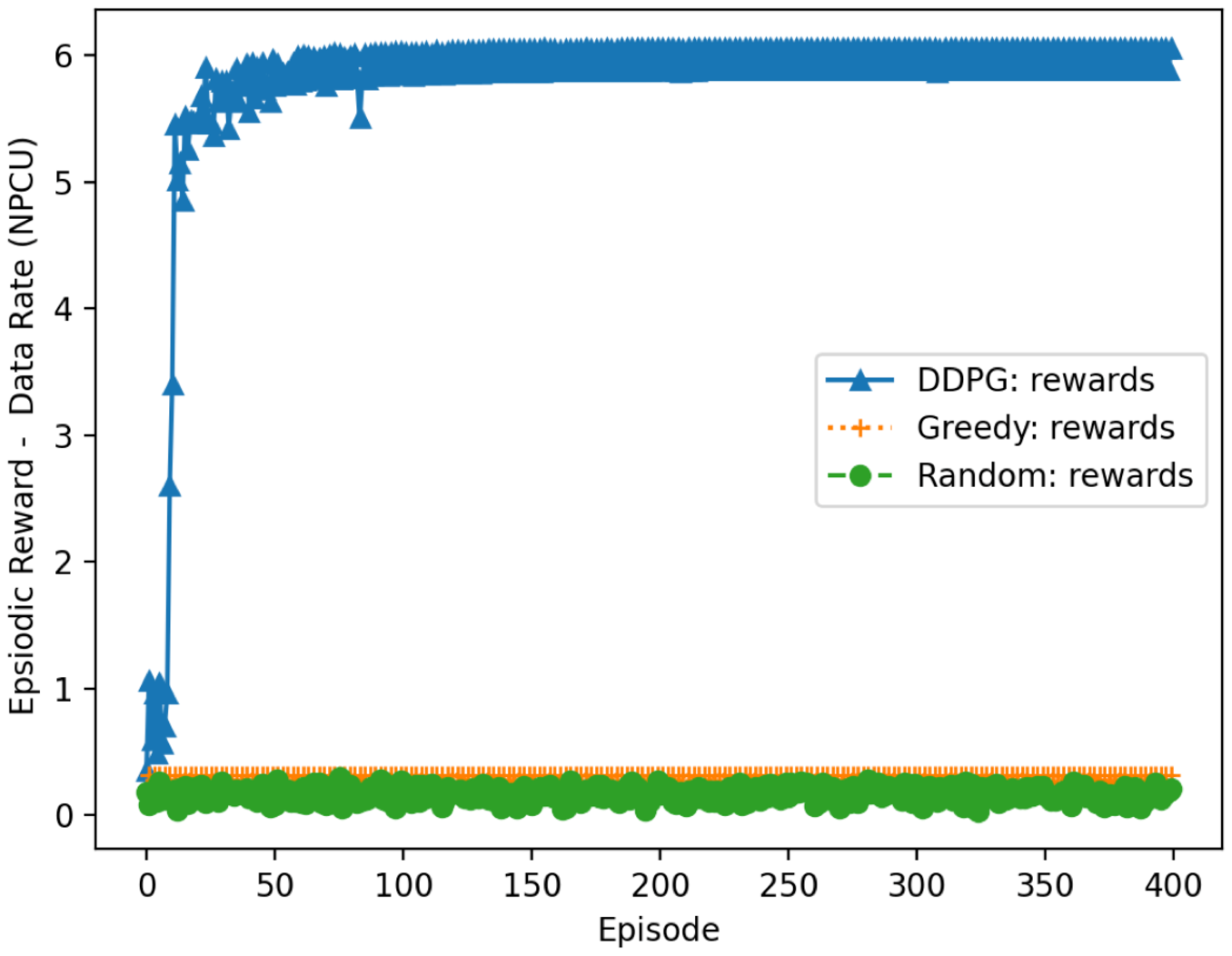}}
\subfigure[ Experiment  II]{\label{fig 0 b}\includegraphics[width=0.49\textwidth]{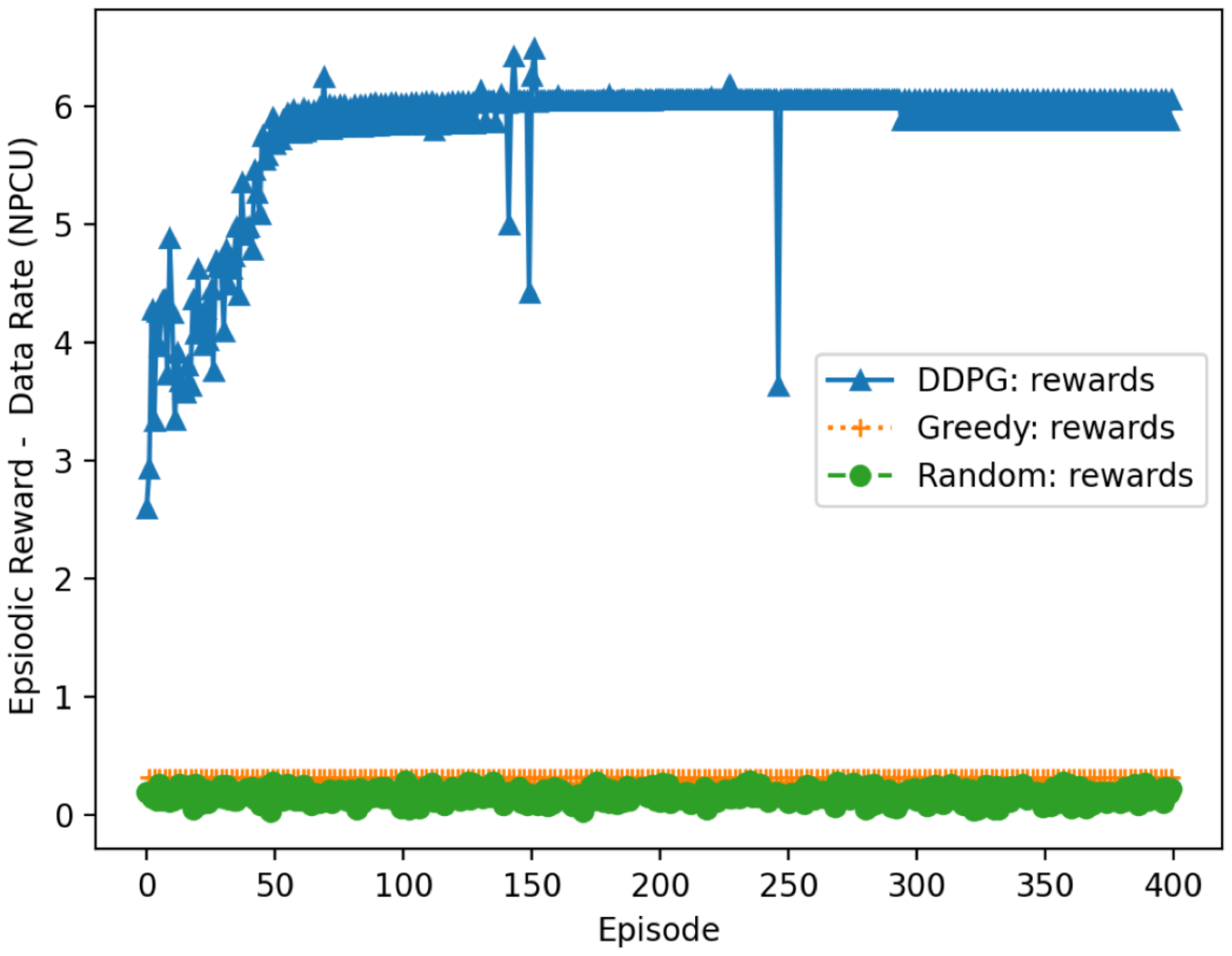}} 
\end{center} \vspace{-1em}
\caption{ A deterministic extreme case with $M=2$ primary users. The two primary users are located at ($0$ m, $1$ m)  and ($0$ m, $1000$ m).   Only large-scale path loss is considered and random fading is omitted.  The difference between the two experiments  is caused by the fact that the neural networks are randomly initialized   at the beginning  of each experiment.     \vspace{-1em} }\label{fig4}\vspace{-1em}
\end{figure}
 \subsection{A  Two-User Deterministic     Case }
 In order to gain insight into the performance of DDPG, we first focus on a deterministic case with two primary users, where  the two primary users are located at ($0$ m, $1$ m)  and ($0$ m, $1000$ m),   only large-scale path loss is considered, and random fading is omitted. The rationale behind this setting is the study of   an extreme scenario, in which ${\rm U_1}$ has a strong channel  to both the base station and ${\rm U_0}$, but ${\rm U_2}$'s channels  to both nodes are extremely weak. An intuitive  decision for this case is to ask ${\rm U_0}$ to carry out energy harvesting only, when ${\rm U_1}$ transmits, which has the following two benefits: severe interference from ${\rm U_1}$ is avoided and   a large amount of energy from ${\rm U_1}$ is harvested. When ${\rm U_2}$ transmits, an intuitive  decision  is to avoid energy harvesting at ${\rm U_0}$ since only a small amount of energy can be harvested due to the severe path loss between ${\rm U_0}$ and ${\rm U_2}$. Instead, ${\rm U_0}$ should carry out data transmission only, since ${\rm U_2}$ will not cause much interference. With this intuitive  decision, the amount of energy harvested when ${\rm U_1}$ transmits is given by \cite{127405}
% \[
% 10^{3.17}*d^{3}
% \]
 \begin{align}\label{enx1}
 E=&(1- \alpha_n) T \eta P_n|h_{n,0}|^2=   0.7  /10^{3.17}  \text{ J}= 7*10^{-4.17} \text{ J}\approx 4.7*10^{-4} \text{ J}.
 \end{align}
When ${\rm U_2}$ transmits, the entire   time slot   is used for data transmission    using the harvested energy shown in \eqref{enx1}. Therefore, the available transmit power is $\frac{E}{T}=4.7*10^{-4} $ W,  which means that   the following average   data rate, denoted by $\bar{R}_n$, is achievable by the secondary user:
\begin{align} 
\bar{R}_n =&\frac{1}{2} \alpha_n \ln\left( 1+\frac{P_{0,n}|g_{0,n}|^2}{1 + P_n|h_n|^2}
\right)\\ \label{ratencc}
 =& \frac{1}{2}  \ln\left( 1+\frac{ 4.7*10^{-4}\times \frac{1}{10^{3.17}*2^{\frac{3}{2}}}}{10^{-14} + 10^{-3.17}\times 1000^{-3}}
\right) \approx 6.01 \text{ NPCU},  
\end{align}
where  the factor $\frac{1}{2}$ is used since ${\rm U_0}$ transmits only when  ${\rm U_2}$ transmits.  Comparing  the result in \eqref{ratencc} with Fig. \ref{fig4}, one can   observe that the   proposed reinforcement learning scheme   realizes this intuitive  decision. We note that DDPG might perform slightly differently for different   random initializations, as shown in Figs. \ref{fig 0 a} and \ref{fig 0 b}. On the other hand,     the benchmark schemes yield much worse  performances  than  the proposed scheme, since the decisions they make are not based on a long-term objective. For example, when   ${\rm U_1}$ transmits, the greedy algorithm will still try to facilitate  data transmission, which results in an inefficient  energy use due    to the strong interference caused by ${\rm U_1}$.

  \begin{figure}[!t]\vspace{-3em}
\begin{center} \subfigure[ Experiment I ]{\label{fig 3 a}\includegraphics[width=0.49\textwidth]{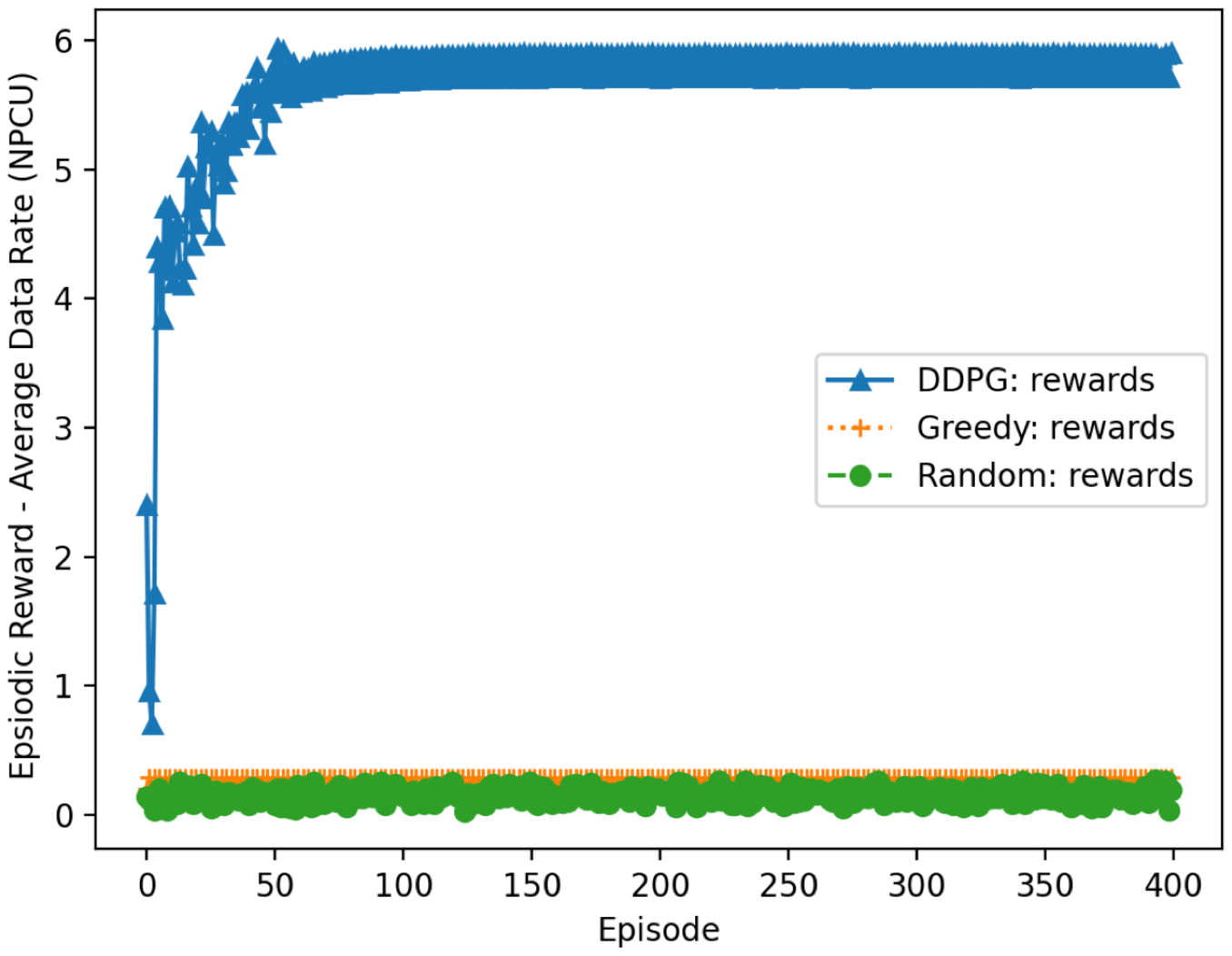}}
\subfigure[Experiment II  ]{\label{fig 3 b}\includegraphics[width=0.49\textwidth]{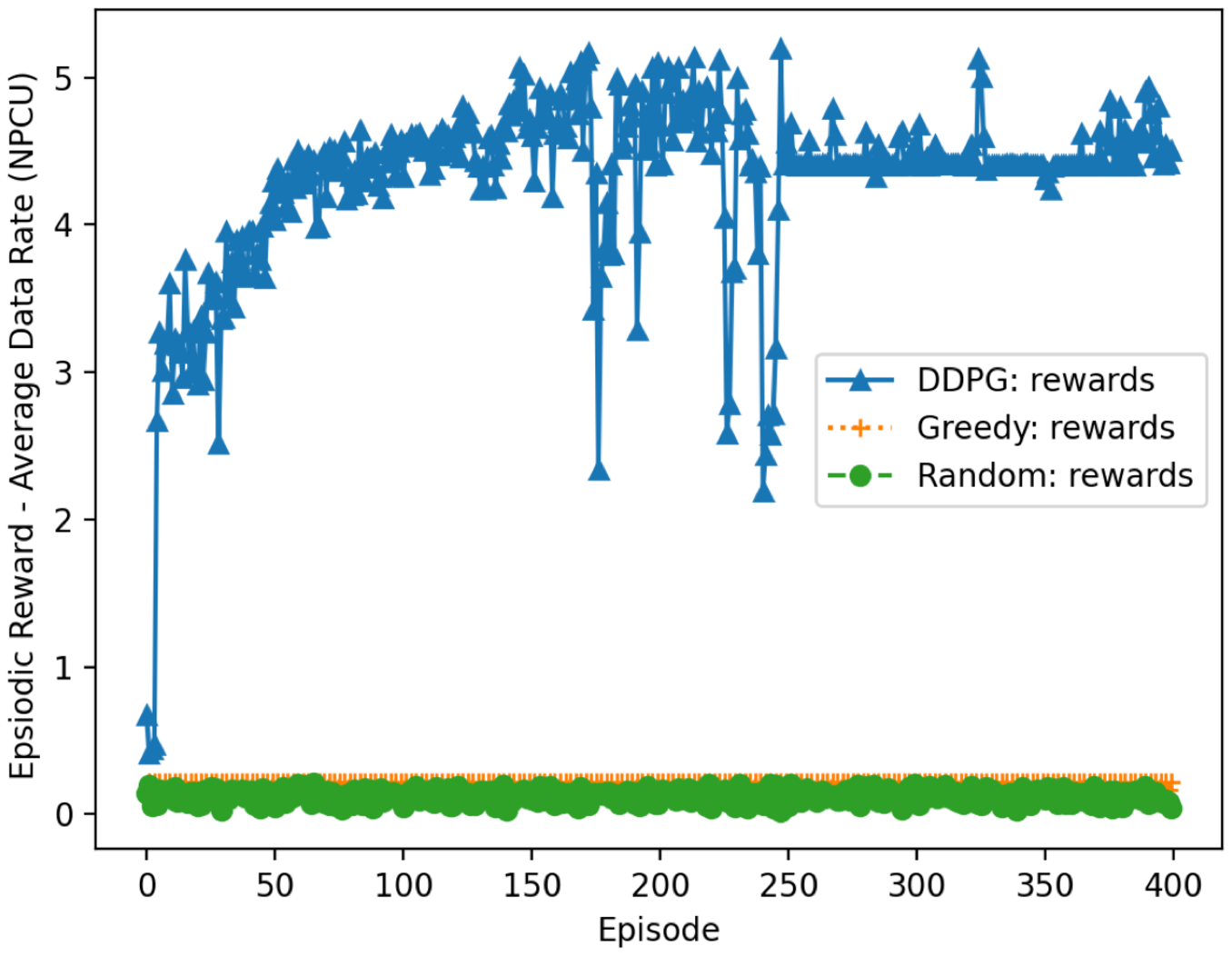}} 
\end{center}\vspace{-1em}
\caption{ The performance of the proposed DDPG scheme with $M=2$ primary users.    Both large-scale path loss and small-scale multi-path fading are considered. The two experiments are based on  different small-scale fading realizations.  }\label{fig3x}\vspace{-2em}
\end{figure}

  \begin{figure}[!b]\vspace{-2em}
\begin{center} \subfigure[ Experiment I  ]{\label{fig 6x a}\includegraphics[width=0.49\textwidth]{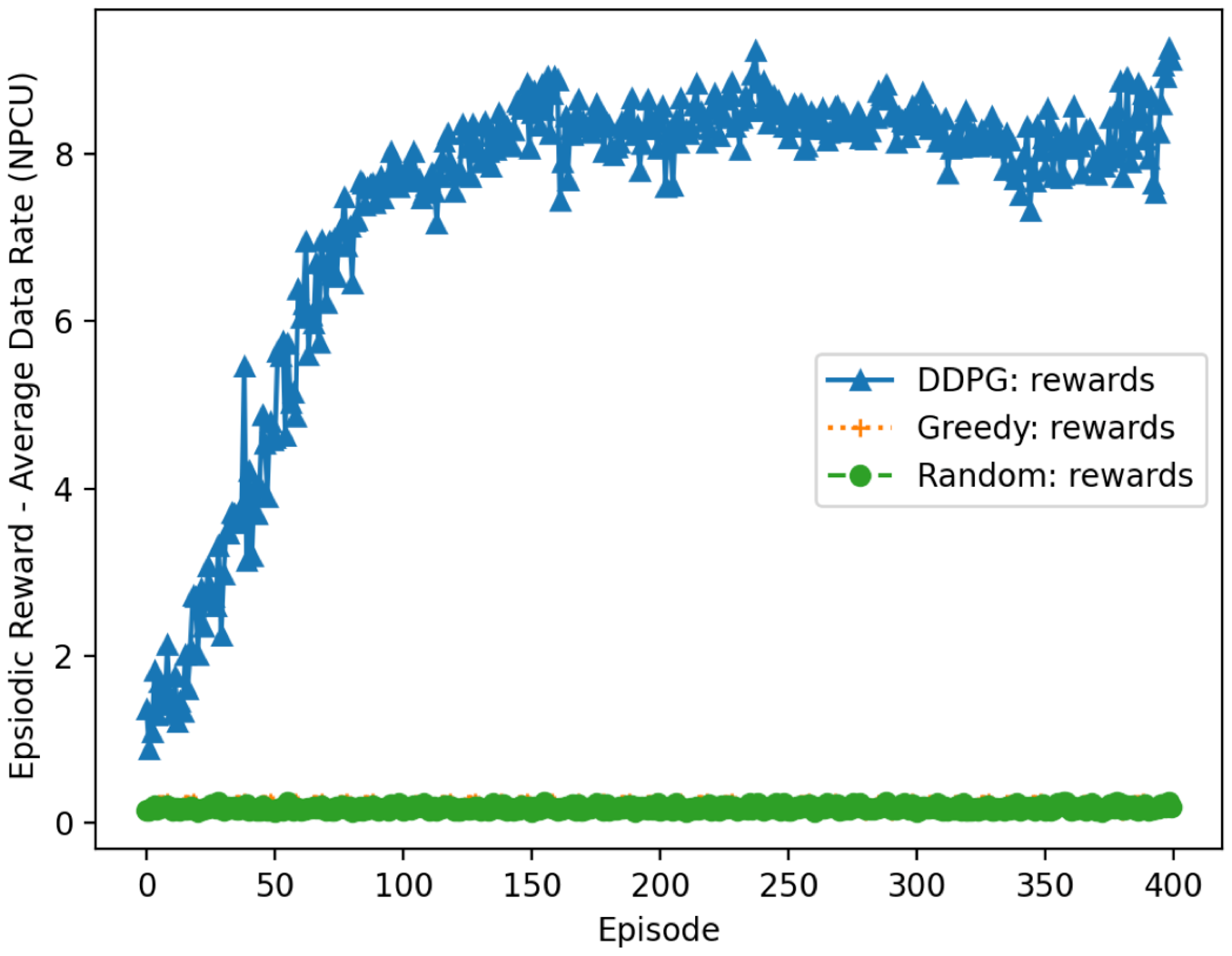}}
\subfigure[  Experiment II  ]{\label{fig 6x b}\includegraphics[width=0.49\textwidth]{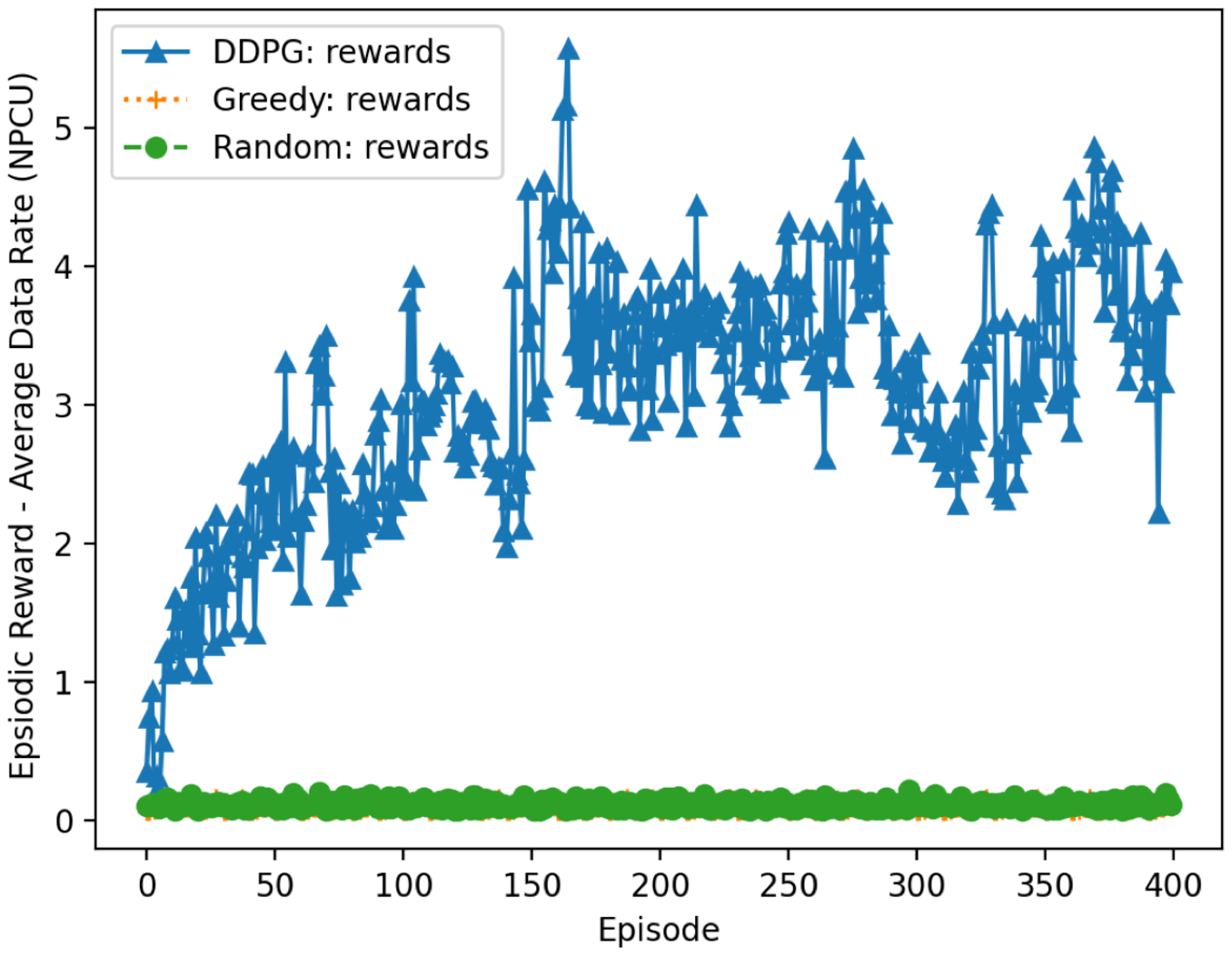}}\vspace{-1em}
\end{center}
\caption{    The performance of the proposed DDPG scheme with $M=10$ primary users.    Both large-scale path loss and small-scale multi-path fading are considered. The two experiments are based on  different small-scale fading realizations.  }\label{fig6x}\vspace{-2em}
\end{figure}
 \subsection{  General Multi-User Cases with  Constant Fading}
 In this subsection, the performance of the proposed reinforcement learning scheme is studied for a more general case with multiple users. In particular, the $M$ users are equally spaced  on a segment between ($1$ m, $0$ m) and ($1000$ m, $0$ m).  {Furthermore, small-scale multi-path fading is also considered, in addition to path loss. In order to study the convergence of the proposed DDPG algorithm,  during each experiment which contains multiple episodes,  the users' channels are kept constant, where independent and identically distributed (i.i.d.) complex Gaussian random variables with zero means and unit variance are  used to simulate the small-scale channel fading.} 
 
  In Figs. \ref{fig3x} and \ref{fig6x},  $M=2$  and $M=10$ primary users are considered, respectively. On the one hand, comparing    Fig. \ref{fig3x} to Fig. \ref{fig4}, a data rate reduction can be observed, which is due to the consideration of channel fading. On the other hand, comparing      \ref{fig 6x a} to  Fig. \ref{fig3x}, it is interesting to observe that the case with   $M=10$ can yield a larger data rate than the case with $M=2$. This is due to the fact that for $M>2$, it is possible for the secondary user to use more than half of the time for data transmission.    As can be observed by comparing the respective  subfigures of Fig. \ref{fig3x}  and \ref{fig6x}, different data rates are realized for a given $M$, which is due to the fact that the   subfigures employ  different realizations of   the small-scale  fading. However, regardless of the realizations of the random fading, the proposed DDPG scheme can always achieve  a considerably larger data rate than the two benchmark schemes, as is evident from the figures.

% 
%  \begin{figure}[t]\vspace{-3em}
%\begin{center} \subfigure[  Experiment I ]{\label{fig 5x a}\includegraphics[width=0.49\textwidth]{snapshot_K5.eps}}
%\subfigure[ Experiment  II]{\label{fig 5x b}\includegraphics[width=0.49\textwidth]{snapshot_K5_bad.eps}} 
%\end{center}\vspace{-1em}
%\caption{   The performance of the proposed DDPG scheme with $M=5$ primary users.  Both large scale path loss and small-scale multi-path fading are considered. The two experiments are based on  different small-scale fading realizations.   }\label{fig5x}\vspace{-1em}
%\end{figure}

  \begin{figure}[!bt] \vspace{-2em}
\begin{center} \subfigure[  $M=2$ ]{\label{fig 0 a5}\includegraphics[width=0.49\textwidth]{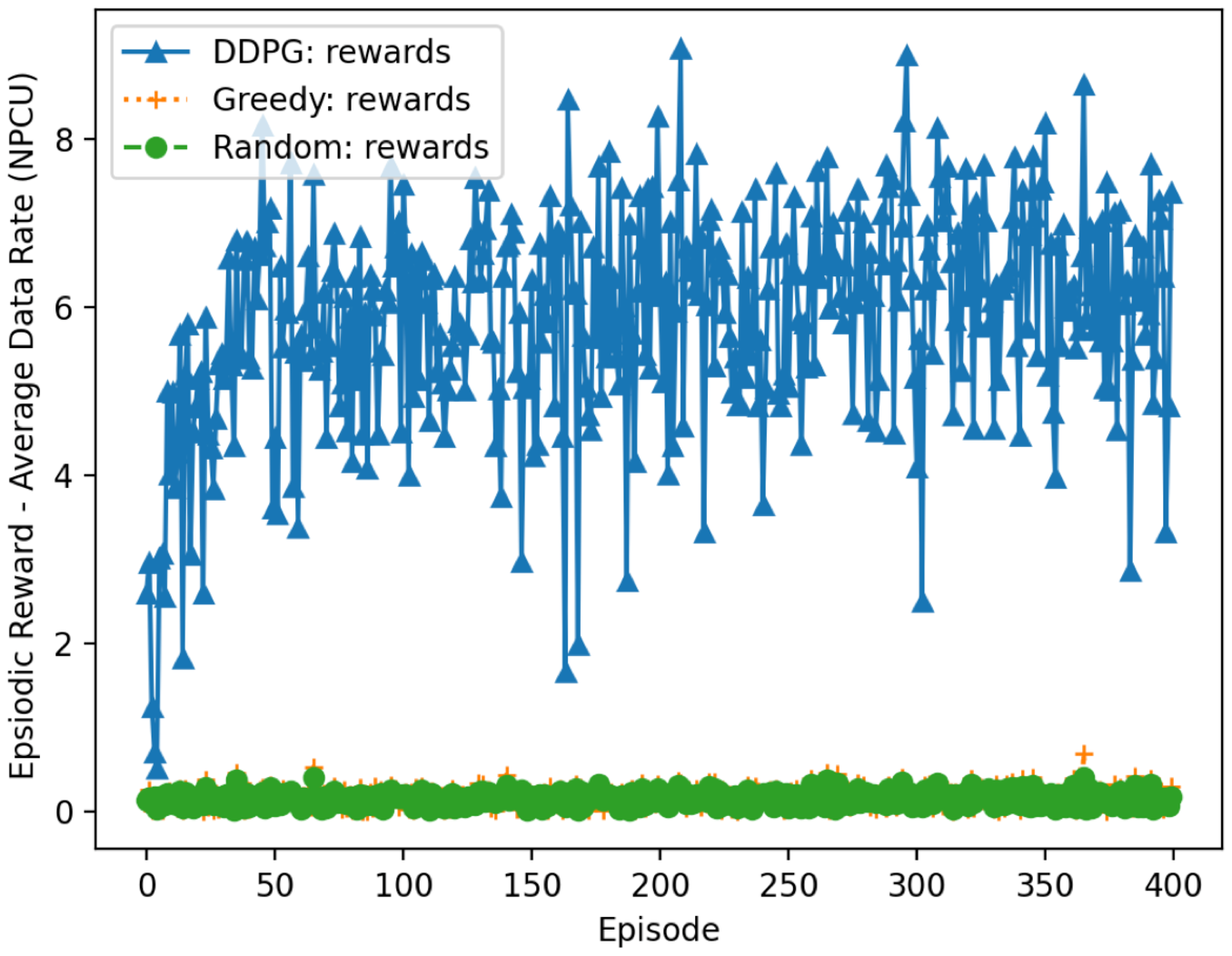}}
\subfigure[ $M=10$]{\label{fig 0 b5}\includegraphics[width=0.49\textwidth]{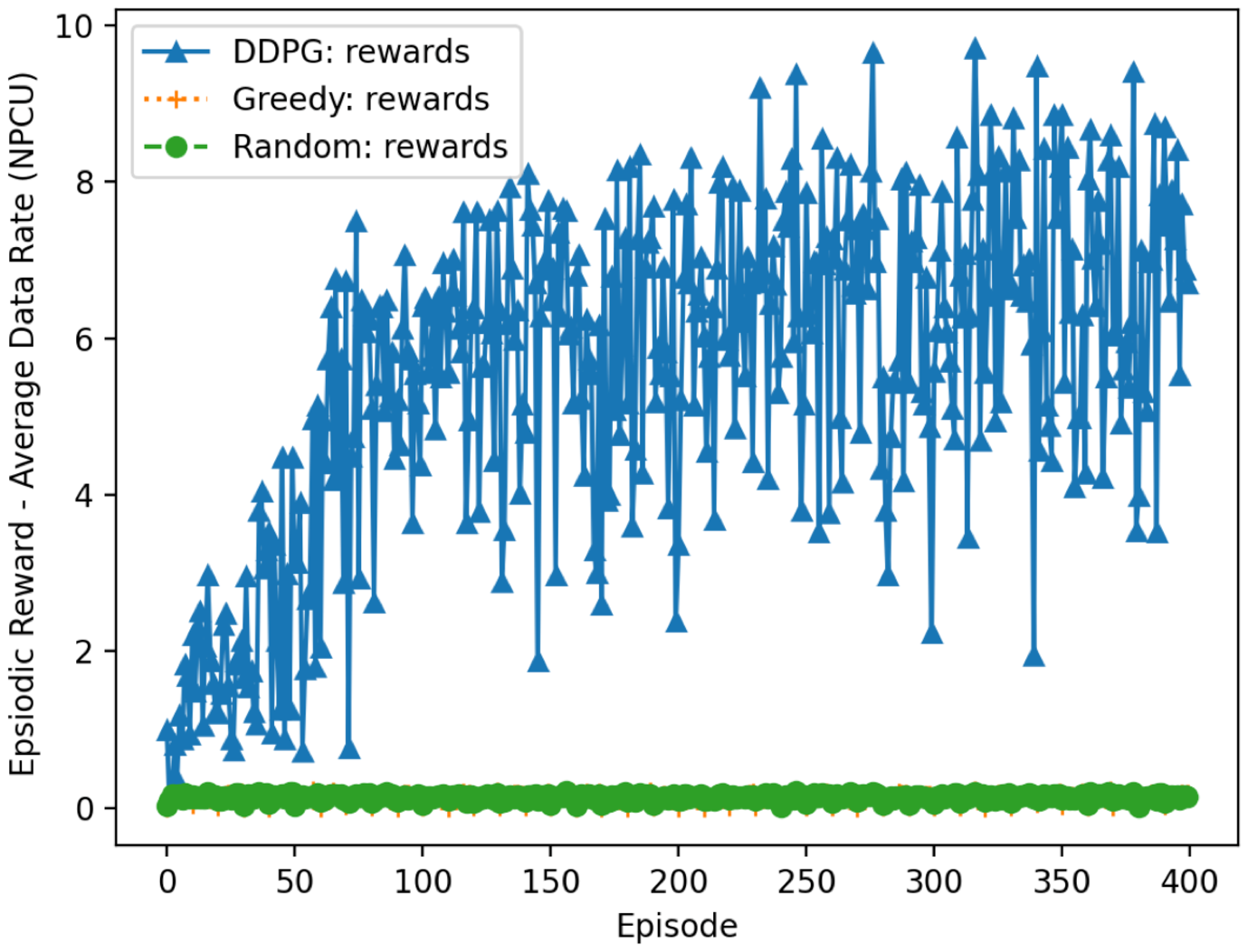}} 
\end{center} \vspace{-1em}
\caption{ The performance of the proposed DDPG scheme in time-varying scenarios, where     different small-scale fading realizations are used in  different episodes. }\label{fignew1} \vspace{-1em}
\end{figure}
Recall that  for the deterministic case with $M=2$ considered in Fig. \ref{fig4}, the decision generated by the proposed DDPG scheme converges quickly, e.g.,   after $50$ episodes the DDPG curves shown in Fig. \ref{fig4} become   almost flat. This property  disappears in general  for the cases with random fading.   In particular, depending on the random fading realization,  the DDPG agent might lead to a constant decision policy  after  a sufficiently large  number of episodes, as shown in Figs. \ref{fig 3 a} and \ref{fig 6x a}. However, Figs. \ref{fig 3 b} and \ref{fig 6x b} also show that for certain  random fading realizations, the DDPG algorithm may converge very slowly. One possible reason for this observation is that for the  two-user deterministic case considered in Fig. \ref{fig4}, the underlying  pattern is simple, e.g., one user has   strong channels  but the other user does not. This simple  pattern can be quickly and efficiently  learned by  DDPG. When there are more users packed in the same area and/or the   fading is random, it is possible that such a clear pattern does not exist, i.e.,   multiple users' channel conditions are similar to each other. As a result, there is no clear strategy for energy harvesting and data transmission.

 \subsection{  General Multi-User Cases with  Time-Varying Fading}
 {Finally, the performance of the proposed DDPG scheme is studied in a general multi-user scenario  with time-varying channels.   For the   simulation results presented in the following, the locations of the users are fixed in the same manner as in the previous subsection, and different small-scale fading realizations are used in  different episodes.    In Fig. \ref{fignew1},  the performance of the proposed DDPG algorithm is shown as a function of the number of episodes in the considered time-varying scenario. As can be observed from the figure, the use of the proposed DDPG algorithm can still offer a significant performance gain over the two benchmark schemes, particularly if  a sufficient number of episodes is employed. In addition, compared to the case $M=2$, the case $M=10$ requires more   episodes to  ensure a significant  performance gain over the benchmark schemes.   These observations  are consistent with those made for time-invariant channels. We   note that  the episode reward curves in  Fig. \ref{fignew1} show more variation   than those for the time-invariant case. This is mainly due to the fact that different fading realizations are  experienced  in  different episodes. As a consequence,   the maximal throughputs in different episodes are also expected to be different, which means that the episode reward curves cannot be flat, even if the proposed algorithm can achieve the respective maximal throughputs.      }

  \begin{figure}[t]\vspace{-3em}
\begin{center} \subfigure[  The cases with $1$,  $10$, and $20$ episodes]{\label{fig 7x a}\includegraphics[width=0.44\textwidth]{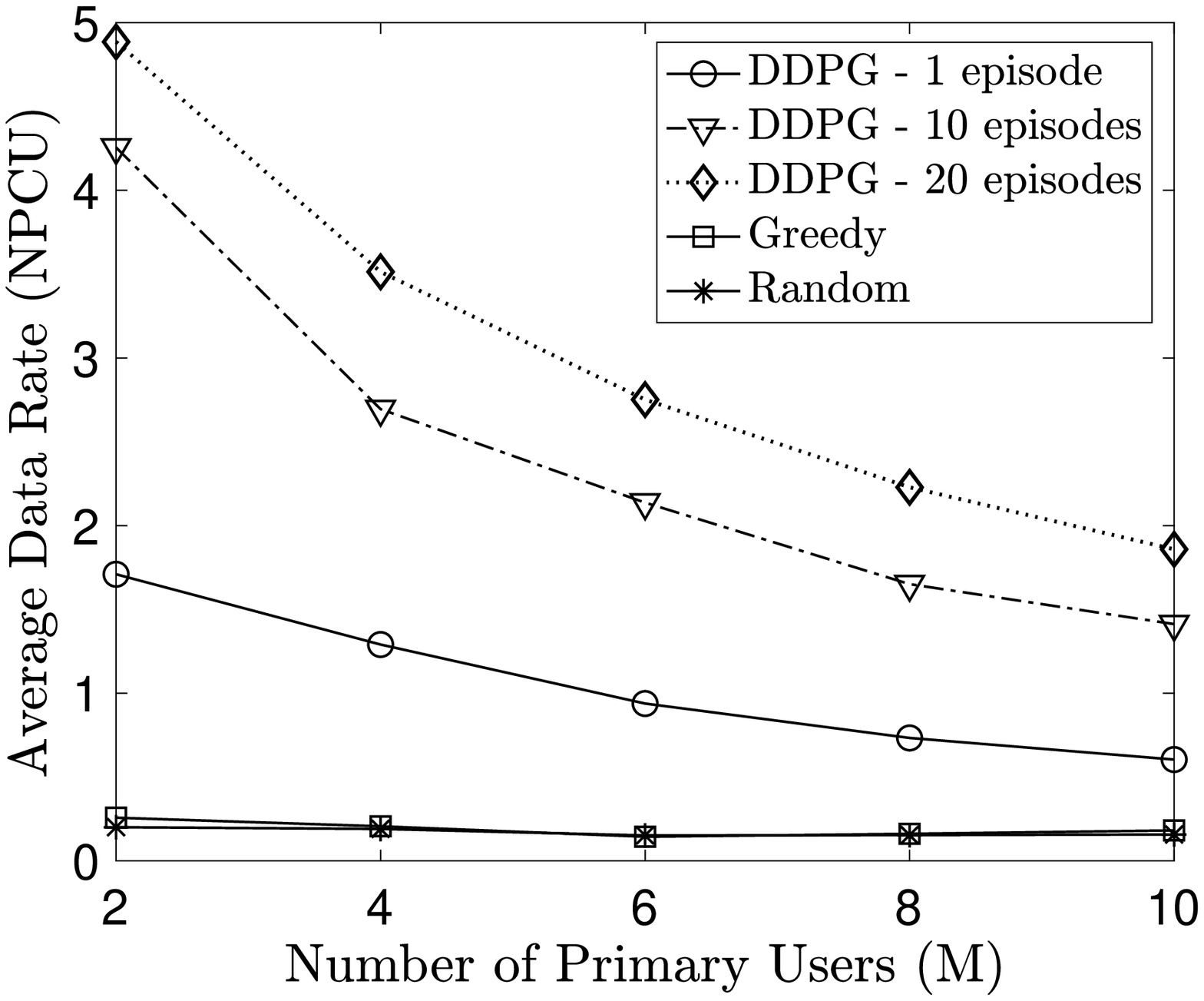}}
\subfigure[   The cases with $50$, $100$, and $150$ episodes]{\label{fig 7x b}\includegraphics[width=0.44\textwidth]{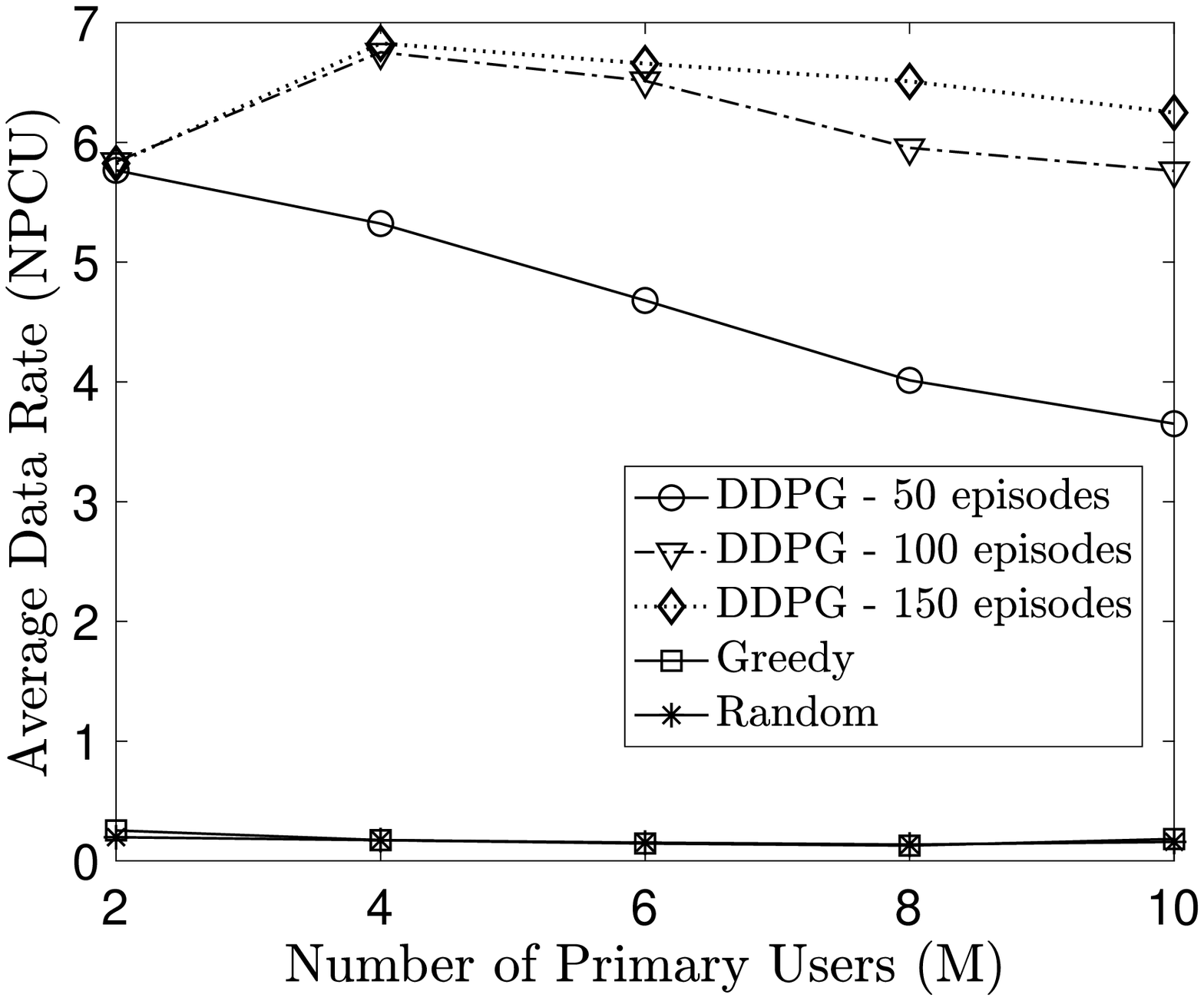}}\vspace{-1em}
\end{center}
\caption{    Impact of $M$ and the number of the used episodes  on the performance of the proposed DDPG scheme in time-varying scenarios, where     different small-scale fading realizations are used in  different episodes.     }\label{fig7}\vspace{-2em}
\end{figure}

 {In Fig. \ref{fig7}, the performances of the three considered    schemes are compared  by using the average data rate as the metric, where    Monte Carlo simulations are carried out to average out the randomness caused by time-varying fading. In Fig. \ref{fig 7x a}, three different choices for the number of    episodes are used, in order to illustrate  how the performance of the DDPG scheme is affected by  the amount of training. As can be observed from Fig. \ref{fig 7x a}, even for a single episode, the  proposed DDPG algorithm can already realize a larger data rate than the greedy and random schemes. By increasing the number of episodes from $1$ to $10$ and $20$, the performance gain of the proposed DDPG algorithm over the benchmark schemes can be further  improved, as shown in Fig. \ref{fig 7x a}.   Recall that we observed   from Figs. \ref{fig3x} and \ref{fig6x} that  the proposed DDPG algorithm starts to converge after $100$ episodes. This is the reason why  increasing the number of episodes in  Fig. \ref{fig 7x b}  from $100$ to $150$ improves the performance of the proposed DDPG scheme  only slightly.   Another important observation from   Fig. \ref{fig7} is that the case with more primary users participating in the NOMA transmission requires a larger amount of training. This observation is expected since a case with more primary users is more complex  and hence more episodes are required to train the DDPG algorithm.  
}

\section{Conclusions}
In this paper,    machine learning has been applied  in CR-NOMA networks to facilitate spectrum and energy cooperation between  multiple primary users and an energy-constrained secondary user. The goal of the machine learning algorithm  is to   maximize the secondary user's long-term throughput, which  is challenging   due to the need for making decisions which yield long-term gains but might result in short-term losses.  For example, when, in a given time slot,  a primary user
with large channel gains transmits, an intuitive approach  is that the secondary user should not carry out data transmission but perform energy harvesting only, which results in  zero data rate for this time slot but yields benefits in the  long-term. Here, the considered long-term throughput maximization problem was first reformulated, and then  a DRL  approach, termed DDPG, was applied to emulate the intuitive solution. The provided simulation results demonstrate that the proposed deep reinforcement learning assisted NOMA transmission scheme can yield  significant performance gains over two benchmark schemes. 

 {While the case with  a single  secondary user has been the focus of  this paper,  the proposed idea of   spectrum and energy sharing among primary and secondary users can be extended to   scenarios with multiple secondary users.   For example, the scheme in \cite{bacnomamtc}  shows the possibility of carrying  out spectrum and energy cooperation among one primary downlink user and multiple secondary uplink users. Compared to the case with a single secondary user, the  resource allocation design  for multi-user systems is more challenging, due to the constraint that additional secondary users should be served  without affecting the primary users' QoS experience. In addition, the interference between the secondary users has to be carefully controlled.  Furthermore, in this paper,  a linear energy harvesting model has been assumed,  and the impact of the energy consumed  for signal processing and by the radio frequency circuits has not been taken into consideration. These assumptions have allowed us to gain important    insights regarding 	the long-term optimization of  the two conflicting tasks, energy harvesting and data transmission.   However,  the extension of  the proposed machine learning algorithm to the case with   more practical assumptions      is an important direction for future research. Moreover,  it was assumed in this paper that the energy constrained secondary user carries out radio-frequency energy harvesting. Alternatively, the secondary user can also use backscatter communications (BackCom) to realize battery-less transmission, and the application of DRL to BackCom assisted NOMA transmission is another interesting  direction for future research \cite{wongwcnc20}.   } 
%%%%%%%%%%%%%%%%%%%%%%%%%%%%%%%%%%%%%%%%%%
\appendices
\section{Proof for Proposition \ref{proposition1}}\label{proof1}

In order to analyze the feasibility of problem \ref{pb:7}, the range of  $\bar{E}_n$ needs to be studied first.  Recall that $\bar{E}_n$ is the difference between the energy harvested and consumed at $t_n$, i.e., $(1-\alpha_n) T \eta P_n|h_{n,0}|^2 - \alpha_n TP_{0,n}$. The smallest value achievable by $\bar{E}_n$ is caused by the case with no energy harvesting, i.e., all the time is  used for data transmission and $\alpha_n=1$, which means that 
\begin{align}\label{bound for En1xx}
\bar{E}_n\geq  -   TP_{0,n}.
\end{align}
Furthermore,   the energy available at the beginning of $t_n$ is $E_n$, which means that the energy available for data transmission is capped by $E_n$, which results in another lower bound on  $\bar{E}_n$, i.e.,    $\bar{E}_n\geq -E_n$. As a result, $\bar{E}_n$ can be lower bounded as follows:
\begin{align}\label{bound for En1x}
\bar{E}_n\geq-\min\{E_n,TP_{\max}\},
\end{align} 
where we replaced  $P_{0,n}$ by $P_{\max}$ in \eqref{bound for En1xx}.
On the other hand, the largest value achievable by $\bar{E}_n$ is caused by the decision not to carry out any data transmission but perform energy harvesting only, i.e., $\alpha_n=0$. In this case, an upper bound on $\bar{E}_n$ can be obtained as follows:
\begin{align}
\bar{E}_n\leq    T \eta P_n|h_{n,0}|^2.
\end{align}
Moreover,  there is an upper bound on the amount of energy stored  at $t_n$ because of the finite battery capacity. In particular, given the capacity of the battery, $E_{\max}$, and the energy available in the battery at the beginning of $t_n$, $E_n$, the maximal amount of new energy that can be stored  at $t_n$ is capped by $E_{\max}-E_n$. As a result,   $\bar{E}_n$ is upper bounded as follows: 
\begin{align}\label{bound for En2x}
 \bar{E}_n\leq \min\left\{ E_{\max}-E_n,  T \eta P_n|h_{n,0}|^2 \right\}.
\end{align}
By combining \eqref{bound for En1x} and \eqref{bound for En2x},  $\bar{E}_n$ can be bounded as follows: 
\begin{align}\label{bound for En}
 -\min\{E_n,TP_{\max}\}\leq \bar{E}_n\leq \min\left\{ E_{\max}-E_n,  T \eta P_n|h_{n,0}|^2 \right\}.
\end{align}

The feasibility of problem \ref{pb:7} is illustrated in the following two steps. The first step is to show that the lower bounds on $\alpha_n$ for problem \ref{pb:7} do not conflict with the upper bounds. The second step is to show that the intersection of the set defined by the constraints of problem \ref{pb:7} and the domain of its objective function is  non-empty.

Note that there are three lower bounds defined by \eqref{2st:7}, \eqref{3st:7}, and \eqref{4st:7}, respectively. 
Firstly, we focus on the lower bound $\alpha_n\geq 0$ in \eqref{4st:7}. Apparently, it does not conflict with the upper bound in \eqref{4st:7}, i.e., $\alpha_n\leq  1$. In order to compare this lower bound to the upper bound in \eqref{1st:7}, $   \alpha_n \leq 1-  \frac{\bar{E}_n}{ T\eta P_n|h_{n,0}|^2} $,  we first rewrite  the upper   bound in \eqref{bound for En} as follows:
\begin{align}
 \bar{E}_n\leq \min\left\{ E_{\max}-E_n,  T \eta P_n|h_{n,0}|^2 \right\}\leq    T \eta P_n|h_{n,0}|^2.
\end{align} 
Therefore, $1-  \frac{\bar{E}_n}{ T\eta P_n|h_{n,0}|^2} \geq 0$, and hence one can conclude that the lower bound  $\alpha_n\geq 0$ does not conflict with the upper bound in \eqref{1st:7}. 

Secondly, we focus on the lower bound in \eqref{2st:7}, $ \alpha_n   \geq 1- \frac{ E_n+\bar{E}_n}{ T\eta P_n|h_{n,0}|^2}$.  To prove  that it does not conflict with the upper bound in \eqref{4st:7}, $\alpha_n\leq 1$, it is sufficient to show the following
\begin{align}
1- \frac{ E_n+\bar{E}_n}{ T\eta P_n|h_{n,0}|^2}\leq 1 ,
\end{align}
which requires $  \bar{E}_n \geq -E_n$. This   always holds because the use of   the   lower bound in \eqref{bound for En} yields    $  \bar{E}_n\geq -\min\{E_n,TP_{\max}\} $ or equivalently $  \bar{E}_n\geq \max\{-E_n,-TP_{\max}\} \geq -E_n$. In addition, it is straightforward to show that the lower bound in \eqref{2st:7} is always smaller than the upper bound in \eqref{1st:7} by using the fact that $E_n\geq 0$. 
 
Thirdly, we focus on the lower bound in \eqref{3st:7}, $\alpha_n\geq \frac{T\eta P_n|h_{n,0}|^2-  \bar{E}_n}{T\eta P_n|h_{n,0}|^2 + TP_{\max}}$.  To prove  that it does not conflict with the upper bound in \eqref{4st:7}, it is sufficient to show the following
\begin{align}
\frac{T\eta P_n|h_{n,0}|^2-  \bar{E}_n}{T\eta P_n|h_{n,0}|^2 + TP_{\max}} \leq 1 ,
\end{align}
which requires $  \bar{E}_n \geq -   TP_{\max}$. This also always holds because the use of   the   bound in \eqref{bound for En} yields   $  \bar{E}_n\geq \max\{-E_n,-TP_{\max}\} \geq -TP_{\max}$. In order to  show that the lower bound in \eqref{3st:7} is always smaller than the upper bound in \eqref{1st:7}, we need to prove the following inequality
\begin{align}
\frac{T\eta P_n|h_{n,0}|^2-  \bar{E}_n}{T\eta P_n|h_{n,0}|^2 + TP_{\max}} \leq 1-  \frac{\bar{E}_n}{ T\eta P_n|h_{n,0}|^2} ,
\end{align}
which can be simplified as follows:
\begin{align}
\bar{E}_n \leq    T\eta P_n|h_{n,0}|^2.    
\end{align}
This inequality always holds since $\bar{E}_n\leq \min\left\{ E_{\max}-E_n,  { T \eta P_n|h_{n,0}|^2 }\right\}\leq T \eta P_n|h_{n,0}|^2 $ based on   \eqref{bound for En}. 
In summary,   the constraints of problem \eqref{pb:7} do not conflict with each other, and the set defined by these constraints is not empty. 

The last step to analyse the feasibility of problem \ref{pb:7} is to show that the intersection of the domain of the objective function of problem \ref{pb:7} and the set defined by  the constraints is not empty.  Recall that the domain of the objective function \eqref{obj:7} imposes the following constraint on $\alpha_n$:
\begin{align}\label{uob7}
 1+  \frac{(1- \alpha_n) }{\alpha_n}\frac{ \eta P_n|h_{n,0}|^2|g_{0,n}|^2}{1 + P_n|h_n|^2}-  \frac{1 }{\alpha_n} \frac{\bar{E}_n|g_{0,n}|^2}{T(1 + P_n|h_n|^2)}\geq 0.
\end{align}
As can be shown, the constraint imposed on $\alpha_n$ by   problem \ref{pb:7} is stricter than \eqref{uob7}. In particular, one can convert the following inequality 
\begin{align}\label{uob7x}
  \frac{(1- \alpha_n) }{\alpha_n}\frac{ \eta P_n|h_{n,0}|^2|g_{0,n}|^2}{1 + P_n|h_n|^2}-  \frac{1 }{\alpha_n} \frac{\bar{E}_n|g_{0,n}|^2}{T(1 + P_n|h_n|^2)}\geq 0,
\end{align}
into the following equivalent form:
\begin{align}\label{uob7y}
 (1- \alpha_n)   T\eta P_n|h_{n,0}|^2  -    \bar{E}_n \geq 0.
\end{align} 
With some algebraic manipulations, one can find that the inequality constraint in \eqref{uob7y} is equivalent to constraint \eqref{1st:7}. In other words, the intersection of the set defined by the constraints of problem \ref{pb:7} and the domain of its objective function is  non-empty.    As a result,     problem \eqref{pb:7} is always feasible, and the proposition is proved.

%%%%%%%
\section{Proof for Proposition \ref{proposition2}}\label{proof2}

Recall that the objective function of problem \ref{pb:7} can be expressed as follows:
\begin{align}
  \tilde{f}^*_0\left( \alpha_n\right) = & \alpha_n \ln\left( 1+  \frac{(1- \alpha_n) }{\alpha_n}\frac{ \eta P_n|h_{n,0}|^2|g_{0,n}|^2}{1 + P_n|h_n|^2}  -  \frac{1 }{\alpha_n} \frac{\bar{E}_n|g_{0,n}|^2}{T(1 + P_n|h_n|^2)}
\right).
\end{align}
By using \eqref{uob7y} and the fact that $ T\eta P_n|h_{n,0}|^2  \geq    \bar{E}_n $, it can be shown   that $\alpha_n=0$ is in the domain of  $ \tilde{f}^*_0\left( \alpha_n\right)$. 
 
To simplify notations, we define    $\kappa_1=\frac{ \eta P_n|h_{n,0}|^2 |g_{0,n}|^2}{1 + P_n|h_n|^2}$,  $\kappa_2=\frac{ \bar{E}_n|g_{0,n}|^2}{T(1 + P_n|h_n|^2)}  $, and $x=\alpha_n$. Therefore, the objective function   $\tilde{f}^*_0(x)$ can be expressed as follows:
\begin{align} \nonumber
\tilde{f}^*_0(x) &=   x \ln\left( 1+  \frac{1-x}{x} \kappa_1 - \frac{1}{x}\kappa_2
\right)\\\label{sigularity} &=
 x \ln\left( 1-\kappa_1+  \frac{ \kappa_1-\kappa_2}{x} 
\right).
\end{align}
As discussed in the proof of Proposition \ref{proposition1},   the term  $1+  \frac{1-x}{x} \kappa_1 - \frac{1}{x}\kappa_2$ is strictly positive once all the constraints in problem \ref{pb:7} are satisfied.  Furthermore, as discussed in Remark 5, it is assumed that   $\bar{E}_n\neq T\eta P_n|h_{n,0}|^2$. Therefore, the first order derivative of $\tilde{f}^*_0(x) $ is given by
\begin{align}\label{first orderxx}
\frac{d\tilde{f}_0^*(x)}{dx} = &   \ln\left( 1-\kappa_1+  \frac{ \kappa_1-\kappa_2}{x} \right)- \frac{\frac{\kappa_1-\kappa_2}{x}}{ 1-\kappa_1+  \frac{ \kappa_1-\kappa_2}{x} }\\ \nonumber
= &   \ln\left( 1-\kappa_1+  \frac{ \kappa_1-\kappa_2}{x} \right)- \frac{(\kappa_1-\kappa_2)}{ x(1-\kappa_1)+    \kappa_1-\kappa_2  } .
\end{align}
The  second order derivative of $\tilde{f}^*_0(x) $ is given by
\begin{align}
\frac{d^2\tilde{f}^*_0(x)}{dx^2} = &   - \frac{(\kappa_1-\kappa_2)}{ x(x(1-\kappa_1)+    \kappa_1-\kappa_2 ) }  + \frac{(1-\kappa_1)(\kappa_1-\kappa_2)}{ (x(1-\kappa_1)+    \kappa_1-\kappa_2 )^2 }  .
\end{align}
With some algebraic manipulations, the second order derivative can be expressed as follows:
\begin{align}
\frac{d^2\tilde{f}^*_0(x)}{dx^2} = &     \frac{-(\kappa_1-\kappa_2)^2}{ (x(1-\kappa_1)+    \kappa_1-\kappa_2 )^2 x}  \leq  0 ,
\end{align}
for $x\geq 0$. Therefore, the objective function is a concave function for $\alpha_n\geq 0$ and the proof is complete.  

%%%%%%
\section{Proof for Lemma \ref{lemma1}}\label{proof3}

 Problem \ref{pb:7} can be expressed in the following compact form 
\begin{problem}\label{pb:8}
  \begin{alignat}{2}
\underset{  \alpha_n}{\rm{max}} &\quad  \alpha_n \ln\left( 1+  \frac{(1- \alpha_n) }{\alpha_n}\kappa_1-  \frac{1 }{\alpha_n} \kappa_2
\right)
\label{obj:8} \\
\rm{s.t.} &  \quad 
   \alpha_n \leq \min\left\{1, 1-  \frac{\bar{E}_n}{ T\eta P_n|h_{n,0}|^2}\right\}  \label{1st:8} \\
  &  \quad 
  \alpha_n   \geq \max\left\{0, 1- \frac{ E_n+\bar{E}_n}{ T\eta P_n|h_{n,0}|^2}, \frac{T\eta P_n|h_{n,0}|^2-  \bar{E}_n}{T\eta P_n|h_{n,0}|^2 + TP_{\max}}\right\} \label{2st:8}.
  \end{alignat}
\end{problem} 

Denote $x^*$ as the optimal solution of the   maximization problem $\underset{x\geq 0}{\max}\text{ } \tilde{f}^*_0(x)$. As indicated by Proposition \ref{proposition2}, $\tilde{f}^*_0(x)$ is a concave function of $x$ for $x\geq 0$, and hence there exists a single maximum, i.e., $x^*$.  However, because $x^*$ is obtained by discarding the constraints, \eqref{1st:8} and \eqref{2st:8}, it is possible that $x^*$ is not a feasible solution of   problem \ref{pb:8}. 

Therefore, the lemma is proved in the following three steps. In the first step, we   analyze the properties  of $\tilde{f}^*_0(x)$ at  the boundary values given by   \eqref{1st:8} and \eqref{2st:8}. In the second step, a closed-form expression for $x^*$ is developed, where the conclusion about the comparison   between $x^*$ and one of the boundary values, $1-  \frac{\bar{E}_n}{ T\eta P_n|h_{n,0}|^2}$, is crucial to avoid the ambiguity caused by the two branches of the   Lambert W function. In the third step, a closed-form solution for problem \ref{pb:8} is obtained by using the obtained closed-form $x^*$.

\subsubsection{Analysis of the properties  of  $\tilde{f}^*_0(x)$}
%While $\tilde{f}_0(x)$ is a concave function, its maximal value might not be achievable after those constraints shown in \eqref{1st:8} and \eqref{2st:8} are imposed.  Or in other words, $x^*$ is outside of the region.
The aim of this subsection is to show that $\tilde{f}^*_0(x)$ is non-decreasing at $x=0$ and non-increasing at $1-  \frac{\bar{E}_n}{ T\eta P_n|h_{n,0}|^2}$,    the   boundary values shown in  \eqref{1st:8} and \eqref{2st:8}, which can be proved by showing the following
\begin{align}\label{bounds proofx}
f'(0) \geq 0\quad \& \quad   f'\left(1-  \frac{\bar{E}_n}{ T\eta P_n|h_{n,0}|^2}\right)\leq 0,
\end{align}
where $f'(x) \triangleq \frac{d\tilde{f}_0^*(x)}{dx}$ denotes the first order derivative of $\tilde{f}_0^*(x)$. 
We note that   \eqref{bounds proofx} requires the implicit assumption that both $0$ and  $1-  \frac{\bar{E}_n}{ T\eta P_n|h_{n,0}|^2}$ are in the domain of the objective function, which can be easily shown by using \eqref{uob7y}.

In order to show $f'(0) \geq 0$, assume  $x=\epsilon\rightarrow 0$. Thus,  by using \eqref{first orderxx}, $f'(\epsilon)  $ can be expressed as follows:
\begin{align}  \nonumber
f'(\epsilon) =  &   \ln\left( 1-\kappa_1+  \frac{ \kappa_1-\kappa_2}{\epsilon} \right)- \frac{(\kappa_1-\kappa_2)}{ \epsilon(1-\kappa_1)+    \kappa_1-\kappa_2  } \\
\underset{\epsilon\rightarrow 0}{\rightarrow}&\ln\left( 1-\kappa_1+  \frac{ \kappa_1-\kappa_2}{\epsilon} \right)- 1.
\end{align}

Note that $\kappa_1-\kappa_2\geq 0$, which can be proved  as follows:
\begin{align}
\kappa_1-\kappa_2 =& \frac{ \eta P_n|h_{n,0}|^2 |g_{0,n}|^2}{1 + P_n|h_n|^2}- \frac{ \bar{E}_n|g_{0,n}|^2}{T(1 + P_n|h_n|^2)}\\\nonumber =&\frac{ \left(T\eta P_n|h_{n,0}|^2  - \bar{E}_n\right) |g_{0,n}|^2}{T(1 + P_n|h_n|^2)}    \geq 0,
\end{align}
which is due to the fact  that  $\bar{E}_n\leq T\eta P_n|h_{n,0}|^2  $ as shown in \eqref{bound for En}.  As a result, $\frac{ \kappa_1-\kappa_2}{\epsilon}$ approaches positive infinity for $\epsilon\rightarrow 0$, where the trivial case $\bar{E}_n= T\eta P_n|h_{n,0}|^2  $ is not considered as discussed in Remark 5. 
Therefore, for $\epsilon\rightarrow 0$, $f'(\epsilon)$ can be approximated as follows: \begin{align}\label{bound1}
f'(\epsilon) \underset{\epsilon\rightarrow 0}{\rightarrow} &     \ln\left( 1-\kappa_1+  \frac{ \kappa_1-\kappa_2}{\epsilon} \right)- 1  \geq 0 .
\end{align}

%%%%%%%%%%
%bound for En
%%%%%%%%

In order to show $f'\left(1-  \frac{\bar{E}_n}{ T\eta P_n|h_{n,0}|^2}\right)\leq 0$, define  $\theta_1=1-  \frac{\bar{E}_n}{ T\eta P_n|h_{n,0}|^2}$ for simplicity   of notation.  
The first order derivative  $f'\left(\theta_1\right)$ is given by:  
\begin{align} 
f'\left(\theta_1\right)  =&\ln\left( 1-\kappa_1+  \frac{ \kappa_1-\kappa_2}{\theta_1} \right)- \frac{(\kappa_1-\kappa_2)}{ \theta_1(1-\kappa_1)+    \kappa_1-\kappa_2  }  . 
\end{align}
The following identity holds:
\begin{align}\nonumber
 \frac{ \kappa_1-\kappa_2}{\theta_1}  = \frac{ \left(T\eta P_n|h_{n,0}|^2  - \bar{E}_n\right) |g_{0,n}|^2}{T(1 + P_n|h_n|^2)}    \frac{ T\eta P_n|h_{n,0}|^2}{T\eta P_n|h_{n,0}|^2-\bar{E}_n}= \frac{\eta P_n|h_{n,0}|^2|g_{0,n}|^2}{(1 + P_n|h_n|^2)}   =\kappa_1,
\end{align}
which means 
\begin{align} 
f'\left(\theta_1\right)  = - \frac{(\kappa_1-\kappa_2)}{ \theta_1(1-\kappa_1)+    \kappa_1-\kappa_2  } . 
\end{align}
$f'\left(\theta_1\right) $ can be further rewritten  as follows:
\begin{align} 
f'\left(\theta_1\right)  = - \frac{\frac{(\kappa_1-\kappa_2)}{\theta_1}}{  (1-\kappa_1)+ \frac{   \kappa_1-\kappa_2}{\theta_1}  } =  - \frac{\kappa_1}{  (1-\kappa_1)+ \kappa_1 }=-\kappa_1 \leq 0,
\end{align}
since $\kappa_1$ is non-negative. Therefore, the inequality in \eqref{bounds proofx} is proved. 

By using the concavity of the objective function and also \eqref{bounds proofx}, it is straightforward to show that  $x^*$ is bounded as follows: 
\begin{align}\label{boundall}
0\leq x^*\leq \theta_1.
\end{align} 
The bounds in \eqref{boundall} will be useful to solve an ambiguity issue caused by  the Lambert W function, as shown in the following subsection.  

\subsubsection{Finding a closed-form expression for $x^*$}
The fact that  the objective function $\tilde{f}_0^*(x)$ is concave means that $x^*$ is the root of the following equation:  
\begin{align}
   \ln\left( 1-\kappa_1+  \frac{ \kappa_1-\kappa_2}{x^*} \right)- \frac{(\kappa_1-\kappa_2)}{ x^*(1-\kappa_1)+    \kappa_1-\kappa_2  } =0,
\end{align}
which can be rewritten as follows:
\begin{align}
   \ln\left( 1-\kappa_1+  \frac{ \kappa_1-\kappa_2}{x^*} \right)- \frac{\frac{(\kappa_1-\kappa_2)}{x^*}}{ (1-\kappa_1)+    \frac{\kappa_1-\kappa_2 }{x^*} } =0.
\end{align}
Define $y^*=1-\kappa_1+  \frac{ \kappa_1-\kappa_2}{x^*}$, which leads to:  
\begin{align}
   \ln\left( y^* \right)- \frac{(y^*-1+\kappa_1)}{ y^*} =0,
\end{align}
or equivalently 
\begin{align}
 y^*  \ln\left( y^* \right)-  y^*=\kappa_1-1. 
\end{align}
In order to apply the the Lambert W function, define $y^*=e^{z^*}$, which results in the following equation:
\begin{align}
z^*e^{z^*}   -  e^{z^*}=\kappa_1-1
\end{align}
or equivalently 
\begin{align}\label{zzzc}
(z^*-1)e^{z^*-1}   =e^{-1}(\kappa_1-1).
\end{align}
The form of \eqref{zzzc} makes the application of the Lambert W function possible. Recall that there is a single real-valued solution for the equation $we^{w}=z$, if $z\geq 0$, which  is given by $w=W_0(z)$, i.e., the principal branch of the Lambert W function. However, if $z$ is negative, there are two real-valued solutions, denoted by $w=W_0(z)$ and $w=W_{-1}(z)$, respectively. 

For the considered problem,    $\kappa_1-1$ is not always positive since
\begin{align}\label{mu11}
\kappa_1-1 =\frac{ \eta P_n|h_{n,0}|^2 |g_{0,n}|^2}{1 + P_n|h_n|^2}-1
\end{align}
can be negative if $|h_n|^2$ is much larger than $|h_{n,0}|^2$. As a result, for the case $\kappa_1-1<0$,   there are two real-valued solutions corresponding to the two branches of the Lambert W function as follows:
\begin{align}
z^*-1 =& W_0\left(e^{-1}(\kappa_1-1)\right), 
\\ 
 \quad z^*-1 =& W_{-1}\left(e^{-1}(\kappa_1-1)\right).
\end{align}
With the two possible choices for $z^*$, there will be also two choices for $x^*$, which makes the final expression of $\alpha^*(\bar{E})_n$ very complicated. Fortunately,  the solution corresponding to $z^*-1 = W_{-1}\left(e^{-1}(\kappa_1-1)\right)$ can be discarded, as proved in the following. 

  In the following, assume $\kappa_1-1<0$, and proof by contradiction will be used to show that $z^*-1 = W_{-1}\left(e^{-1}(\kappa_1-1)\right)$ is not a feasible solution.  Note that, according to \eqref{mu11},  $\frac{ \eta P_n|h_{n,0}|^2 |g_{0,n}|^2}{1 + P_n|h_n|^2}$ is strictly positive, and hence $\kappa_1-1$ is strictly larger than $-1$, i.e. $-1<\kappa_1-1<0$,  According to the property of the Lambert W function, $w=W_{-1}(z)< -1$, for $-1<z<0$. Therefore,   $ z^*-1 = W_{-1}\left(e^{-1}(\kappa_1-1)\right)<-1$,which means 
\begin{align}
z^* < 0 \Longrightarrow y^*=e^{z^*}< 1.
\end{align}
By using the relationship between $y^*$ and $x^*$, $x^*$ needs to satisfy the following inequality:
\begin{align}
1-\kappa_1+  \frac{ \kappa_1-\kappa_2}{x^*}< 1\Longrightarrow     \kappa_1(1-x^*)< \kappa_2 ,
\end{align}
where the fact that $x^*\geq 0$ is used. 
By substituting  the expressions of $\kappa_1$ and $\kappa_2$ into the above inequality, $x^*$ needs to satisfy the following inequality:
\begin{align}
\frac{ \eta P_n|h_{n,0}|^2 |g_{0,n}|^2}{1 + P_n|h_n|^2}(1-x^*)< \frac{ \bar{E}_n|g_{0,n}|^2}{T(1 + P_n|h_n|^2)}  ,
\end{align}
which can be further rewritten     as follows:
\begin{align}\label{wrong conclusion}
T\eta P_n|h_{n,0}|^2   (1-x^*)<   \bar{E}_n   .
\end{align}
In the following, we  show that the condition  in \eqref{wrong conclusion} cannot be met. According to our previous analysis, $f'(\theta_1)\leq 0$, which implies  the following: 
\begin{align}
x^*\leq \theta_1\triangleq 1-  \frac{\bar{E}_n}{ T\eta P_n|h_{n,0}|^2}. 
\end{align} 
With some algebraic manipulations, the inequality that $x^*$ needs to satisfy can be expressed as follows:
\begin{align} 
   (1- x^*) T\eta P_n|h_{n,0}|^2\geq  \bar{E}_n,
\end{align} 
which contradicts the condition  in \eqref{wrong conclusion}.  Therefore, the solution $z^*-1 = W_{-1}\left(e^{-1}(\kappa_1-1)\right)$ can be discarded, and $z^*-1 = W_0\left(e^{-1}(\kappa_1-1)\right)$ is the   solution which should be used.

Therefore, with $z^*-1 = W_0\left(e^{-1}(\kappa_1-1)\right)$,  $y^*$ can be written as follows:  
\begin{align}
  y^*=e^{z^*}=e^{ W_0\left(e^{-1}(\kappa_1-1)\right)+1} ,
\end{align}
which means that a closed-form expression for $x^*$ is obtained as follows: 
\begin{align}
 x^* = \frac{\kappa_1-\kappa_2}{e^{ W_0\left(e^{-1}(\kappa_1-1)\right)+1}  - 1+\kappa_1} .
\end{align}

\subsubsection{Finding the optimal solution for problem \ref{pb:8}}
Although  $x^*$   yields the maximal value for the objective function,   $x^*$ is not necessarily the optimal solution of problem \ref{pb:8}, since $x^*$ might violate one of the constraints of problem \ref{pb:8}.  To facilitate the discussions, the constraints of problem \ref{pb:8} are rewritten as follows:
\begin{align}
   \max\left\{0, \theta_0\right\}\leq  \alpha_n \leq \min\left\{1, \theta_1\right\}  ,
\end{align}
which defines the feasible set of $\alpha_n$, 
where $\theta_0$ is defined in Lemma \ref{lemma1}. While the relationship between $0$, $\theta_1$ and $x^*$ is fixed as shown in \eqref{boundall},  the relationship between $x^*$, $1$, and $\theta_0$ is not fixed, and there are three possible cases which yield three different solutions:
\begin{itemize}
\item Case 1: When $\max\{0, \theta_0\}\leq x^*$ and $x^*\leq \min\{1,\theta_1\}$,   $x^*$ is inside of the feasible set of   problem \ref{pb:8} as shown in Fig. \ref{figx00}. Hence, $x^*$ is the optimal solution of the problem.
\item Case 2: When $ x^*<\theta_0$, the feasible set of $\alpha_n$ is $[\theta_0, \min\{1,\theta_1\}]$ and $x^*$ is at the left-hand side of the feasible set as shown in Fig. \ref{figx00}. The function is monotonically decreasing over the feasible set, and hence  $\theta_0$ is the optimal solution of the problem.
\item Case 3: When $ x^*>1$, the feasible set of $\alpha_n$ becomes $[\max\{0, \theta_0\},1]$ and  $x^*$ is at the right-hand side of the feasible set as shown in the figure. The function is monotonically  increasing over the feasible set, and hence  $1$ is the optimal solution of the problem. 
\end{itemize}
In summary, the optimal solution for problem \ref{pb:8} is given by
\begin{align}
\alpha_n^*(\bar{E}_n) =  \min\left\{1, \max\left\{ x^*, \theta_0\right\} \right\},
\end{align}
and the lemma is proved. 

\begin{figure}[t]\centering \vspace{-2em}
    \epsfig{file=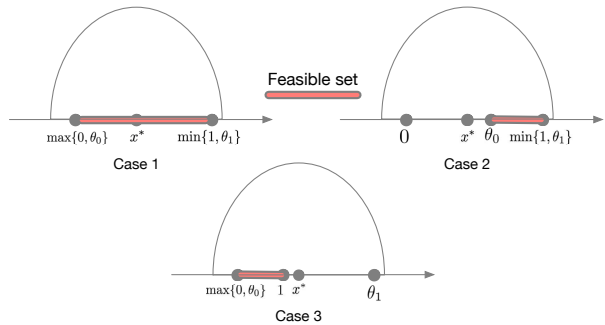, width=0.48\textwidth, clip=}\vspace{-2em}
\caption{ Illustration for three possible cases for $x^*$}\label{figx00}\vspace{-2em}
\end{figure} \vspace{-2em}
\linespread{1.25}
     \bibliographystyle{IEEEtran}
\bibliography{IEEEfull,trasfer}

   \end{document}